\newtheorem{theorem}{Theorem}
\newtheorem{claim}[theorem]{Claim}
\newtheorem{proposition}[theorem]{Proposition}
\newcommand{\Xomit}[1]{ }
\newenvironment{proof}[1][Proof]{\textbf{#1.} }{\ \rule{0.5em}{0.5em}}
\mathchardef\mhyphen="2D
\newcommand{\eps}{\upvarepsilon}
\newcommand{\veps}{\varepsilon}
\newcommand{\del}{\updelta}
\begin{document}

\title{Parallel solutions for preemptive  makespan scheduling \\ on two identical machines}
\date{}
\author{Leah Epstein\thanks{
Department of Mathematics, University of Haifa, Haifa, Israel.
\texttt{lea@math.haifa.ac.il}. }}
\maketitle

\begin{abstract}
We consider online preemptive scheduling of jobs arriving one by one, to be assigned to two identical machines, with the goal of makespan minimization.
We study the effect of selecting the best solution out of two independent solutions constructed in parallel in an online fashion.
Two cases are analyzed, where one case is purely online, and in the other one jobs are presented sorted by non-increasing sizes.
We show that using two solutions rather than one improves the performance significantly, but that an optimal solution cannot be obtained for any constant number of solutions constructed in parallel. Our algorithms have the best possible competitive ratios out of algorithms for each one of the classes.
\end{abstract}

\section{Introduction}
Online scheduling problems on multiprocessors, where jobs are scheduled one by one on $m \geq 2$ machines, are defined as follows. A sequence of jobs is presented, where each job $j$ has a size $p_j>0$, and it should be assigned to run on the machines.
An online algorithm creates a single solution, by assigning one job at a time, without any knowledge of future jobs, and without a possibility to modify the assignment of previously presented jobs. The algorithm does not know when the input will stop, and it is judged by the quality of its constructed solution after the input sequence is terminated.
Since real-life scheduling problems allow some flexibility, there are also relaxed models, where reassignment is sometimes possible (see for example \cite{SSS04,ChenLBDH11,EL11,AH17_2,Levin22}). Another relaxed approach is the design of semi-online algorithms, where some prior knowledge on the input is available (see for example \cite{AH12,KK13,KKG15}). In this work, we consider a variant where the online algorithm can construct not a single solution, but a constant number of solutions, where the algorithm can choose one of its solutions as its output when the input is stopped \cite{KKST97}. The solution is selected after the input ends as the current best solution.

To specify the definition of an assignment of a job to machines, we distinguish several models. In all models, for every solution and at each time in the schedule, every machine can run at most one job, and any job can be run on at most one machine.
In the non-preemptive variant, each job is assigned to one of the machines.  Every job $j$ receives a time interval of the form $[S_j,C_j)$, where $S_j$ is the starting time of $j$ and $C_j$ is the completion time of $j$. In this variant, it can be assumed that the time intervals on each machine (for the different jobs assigned to this machine) will be consecutive. The completion time or load of a machine is the supremum time that it is running a job, and the makespan is the maximum load of any machine. When multiple solutions are considered, we sometimes use the term {\it maximum completion time} for the makespan of a single solution (as opposed to the makespan which is the smallest maximum completion time over all solutions).
For the machine model called identical machines,  the processing times of jobs are equal to their sizes, and the completion time or load of a machine is the total size of jobs assigned to it. In particular it holds that $C_j=S_j+p_j$ for every job $j$. Uniformly related machines have speeds, such that the speed of machine $i$ is denoted by $s_i\geq 0$, and $C_j=S_j+\frac{p_j}{s_i}$ for every job $j$ assigned to machine $i$, that is, the processing time of job $j$  on machine $i$ is $\frac{p_j}{s_i}$, and the completion time of machine $i$ is the total size of jobs assigned to it divided by its speed.

Here, we study the preemptive variant. In this problem, every job has to be assigned completely to machines, possibly by splitting the job into parts, such that it is allocated to non-overlapping time intervals, on one machine or on multiple machines. The makespan is again the supremum time that any machine runs a job (i.e., the completion time or load of a machine is defined as the supremum time it runs a job, and the makespan is defined as before). A complete assignment for identical machines means that the total length of time intervals for every job $j$ is $p_j$. That is, the job is split into parts that are scheduled independently but not in parallel. For uniformly related machines, letting $t_{ij} \geq 0$ denote the total length of time intervals of job $j$ on machines $i$, it holds that $\sum_{i=1}^m t_{ij}\cdot s_i=p_j$. That is, the job can still be split into parts assigned to disjoint time slots, and the time slot needed for each part is based on the speed of the machine running it.

We analyze online (and semi-online) algorithms via the competitive ratio. This is the worst-case ratio between the cost (makespan in our case) of the online algorithm and the optimal offline cost (for the same input). In the model where parallel solutions are considered, the cost of the algorithm is the minimum cost out of the costs of its solutions. All the stated problems have constant competitive algorithms, which are algorithms whose competitive ratios are independent of the input. One can find results for uniformly related machines and identical machines, in the non-preemptive case and the preemptive case, for unsorted and sorted inputs, in a number of articles \cite{ChVlWo95,Sg97,ENSSW01,EJS09,ES09a,ES11,Gr66,Gr69,SSW00,EpsteinF02,EpsteinF05,Albers99,FaKeTu89,AAFPW97,
BCK00,Fleischer2000,JSSB13,Kovacs10}, and here we specify the known results for identical machines and the preemptive variant, which we study in this work.

For online algorithms and a single solution, the best possible competitive ratio for $m$ machines is $\frac{m^m}{m^m-(m-1)^m}<\frac{e}{e-1}\approx 1.5819767$, if the input is unsorted \cite{ChVlWo95,Sg97}, so the best possible competitive ratio for $m=2$ is $\frac 43$. For inputs where jobs arrive sorted by non-increasing sizes, the tight result is known for every $m$, its supremum value is approximately $1.36603$, and its value for $m=2$ is $1.2$ \cite{SSW00}. This last article by Seiden, Sgall, and Woeginger \cite{SSW00} deals with several variants of sorted inputs for identical machines (see also \cite{CKK}), while sorted inputs for uniformly related machines were studied in various scenarios as well \cite{EpsteinF02,Kovacs10}.

The model with several solutions constructed in an online fashion is strongly related to online problems with advice. In such models, the algorithm has access to {\it advice bits}, which are hints telling the algorithm how to proceed \cite{BKKKM17,BFKLM17,BFKM18}. The two models (with advice and with multiple solutions) are essentially equivalent. The last claim holds since the advice bits can be guessed or enumerated by parallel solutions whose number is $2$ to the power of advice bits. On the other hand, given an algorithm with parallel solutions, one can ask for the advice of the index of the solution that will be the best for the planned input.
However, work on online algorithms with advice is usually involved with algorithms where the number of advice bits is fairly large, and the number of solutions is an integer power of $2$, where this large number of advice bits comes from an unknown source. For example, if the algorithm is told (by the advice) which machine to use for every job, the problem becomes uninteresting.  Even if the number of advice bits is independent of the input size, still the number of bits can be a function of a parameter $\eps$, and this affects the performance of the algorithm, in the sense that it performs better for very small values of $\eps$. The resulting number of solutions for a desired value of $\eps$ is far from being sufficiently small in practice, typically even for values of $\eps$ that are not very small (for example, for $\eps=\frac 12$).
Here, we focus on a slightly different line of research, where the capacity of advice is very small, and it consists of just a few bits \cite{FGK019}. In our main results, the advice in fact consists of a single bit, that is, the number of solutions is just two.

We believe that the first problem that was ever studied with respect to multiple solutions is non-preemptive online scheduling of jobs arriving one by one to minimize makespan for the case $m=2$ \cite{KKST97}. In that work, an algorithm of competitive ratio $
\frac 43$ is designed for the case of two solutions, and a matching lower bound on the competitive ratio is provided. This can be compared to the best possible ratio of $\frac 32$ with a single solution \cite{Gr66,FaKeTu89}. A different algorithm with a larger number of advice bits was designed later \cite{Doh15}. Other algorithms, including ones for arbitrary numbers of machines, with a constant (but large) number of advice bits, and with a non-constant numbers of advice bits (but very small competitive ratios), were given \cite{AH17_1,RRVS,Doh15,BFKM18}. A notable result by Albers and Hellwig \cite{AH17_1} is that for obtaining a competitive ratio below $\frac 43$, the number of parallel solutions has to be at least linear in the number of machines, $m$. In the same work, an algorithm with a constant (but very large) number of parallel solutions, achieving an upper bound close to this threshold of $\frac 43$ on the competitive ratio for identical machines, was presented. Other models of scheduling \cite{BFKM18,BKKKM17} were analyzed as well, with respect to a large amount of advice.

There is no direct connection between the competitive ratios for the preemptive and non-preemptive problem with the same input. For large numbers of machines, the best possible competitive ratio for the variant with preemption is much smaller \cite{ChVlWo95,RudinC03}. This is also the case for two machines and arbitrary inputs \cite{ChVlWo95,FaKeTu89}, but not for sorted inputs. For inputs consisting of jobs presented sorted by non-increasing sizes, the competitive ratio for preemptive algorithms is $1.2$ \cite{SSW00}, and it is $\frac 76\approx 1.166667$ for non-preemptive algorithms \cite{SSW00}. Moreover, for identical jobs (each of size $1$), the non-preemptive case is trivial (via a round robin assignment), while the preemptive case is harder. Note that for two machines and two solutions, one can obtain an optimal solution for the preemptive case. Specifically, one solution will be the non-preemptive optimal solution, while the other one assigns the first three jobs carefully (the first job is assigned during the time $[0,1)$ on the first machine, the second job is assigned during $[1,1.5)$ on the first machine and during $[0,0.5)$ on the second machine, and the third job is assigned during $[0.5,1.5)$ on the second machine), and remaining jobs are assigned non-preemptively via round robin. The first solution is optimal for the preemptive problem for inputs with even numbers of jobs (and for an input with a single job), and the second solution is optimal for inputs with odd numbers of jobs. Interestingly, for a single solution and two machines, the case of identical jobs is known to be the hardest \cite{SSW00}, while this does not hold for the case of two solutions, which is implied by our results.

\noindent{\bf Our results.} In this work, we study the case of two identical machines with respect to preemptive online solutions. We start with a brief discussion of constant number of solutions, where the constant may be large. Specifically, we show that no algorithm with a fixed number of solutions can be optimal (in the sense that its competitive ratio is $1$), which holds even for a sorted input, but it is possible to obtain a competitive ratio arbitrarily close to $1$, by allowing relatively large (but fixed) numbers of solutions. Our main results are algorithms whose competitive ratios are the best possible, both with two solutions. The first one is for the case where job sizes are arbitrary, and the second one is for the case where jobs are presented in a sorted order, such that they are sorted by non-increasing size.
In both cases, the competitive ratios are significantly smaller than those of the case with a single solution. Specifically, for the general case, the competitive ratio is approximately $1.236068$ while the best possible competitive ratio with a single solution is $\frac 43$ \cite{ChVlWo95}. For non-increasing sizes, the competitive ratio is $1.10102051$ while the best possible competitive ratio with a single solution is $1.2$ \cite{SSW00}.

The approach for obtaining our algorithms is based on a careful design of each one of the two solutions, such that their properties complement each other.
This is combined with ideas used in previous work. The algorithmic approach of \cite{KKST97} for the design of a non-preemptive algorithm is as follows. The algorithm has two solutions, where one has a good performance, while the other one accumulates jobs on one machine. Once a large job arrives, the two solutions swap roles. Thus, the method is that the two solutions have different goals, where one is very balanced and one is very imbalanced, such that it requires a very large job to become balanced. The algorithmic idea is to exchange the roles of the two solutions under some condition on the input.
Here, we cannot normally just assign all jobs to one machine, yet one solution usually will have a larger makespan than the other one. The solutions are constructed such that the ratio between machine completion times is fixed for each solution, unless there was a large job. The solutions change roles upon the arrival of a large job, since the balance changes, and the way each solution is balanced is swapped. This approach of an attempt to keep a fixed ratio between machine completion times is a method that is often used for preemptive algorithms \cite{ChVlWo95,Eps01}. The algorithm computes the optimal offline makespan and tries to reach a makespan of the competitive ratio times this optimal offline makespan on its most loaded machine. Thus, it sets a threshold in each step. In our algorithm, we use two different thresholds for the two solutions, such that swapping the solutions also swaps the thresholds.

We summarize the most relevant previous work and our results in Table \ref{tabbb}.
{\begin{table}[h!] {\renewcommand{\arraystretch}{1.2   }
\small
$$
\begin{array}{||c||c||c||}
\hline
\mbox{scheduling model} & \mbox{single solution} & \mbox{two solutions} \\
\hline
\hline
m \mbox{ \ machines, unsorted }  & 1.581977 \ \ \cite{ChVlWo95} & - \\ \hline
m \mbox{ \ machines, sorted } &
 1.36603 \ \ \cite{SSW00} &  - \\ \hline
\mbox{  two machines, unsorted }     & 1.333333 \ \ \cite{ChVlWo95} & 1.236068 \\ \hline
\mbox{two  machines, sorted } &1.2  \ \ \cite{SSW00} & 1.101021 \\ \hline

\hline
\hline
\end{array}
$$}
\caption{\label{tabbb} A summary of known and new results for the competitive ratio of online preemptive scheudling on identical machines. The stated values are approximate, but all results are tight.
Results without a citation are proved here in Sections \ref{unso},\ref{sor}. Empty entries mean that there is no improved result for this case.}
\end{table}
}

Given a set of solutions constructed by an online algorithm, we say that the makespan (for the output of this algorithm) is the minimal maximum completion time, where the minimum is taken over solutions, and the maximum is taken over machines. In the case of two solutions $A_1$ and $A_2$, where the final machine loads are $a_{1,1}$ and $a_{1,2}$ for $A_1$ and they are $a_{2,1}$ and $a_{2,2}$ for $A_2$, the makespan is $\min_{k=1,2} \{\max_{i=1,2} a_{k,i}\}$. In the case where no idle time is introduced, we have $a_{1,1}+a_{1,2}=a_{2,1}+a_{2,2}=W$ for the total size of jobs, $W$.

\section{Constant numbers of solutions}
In this section, we provide observations regarding large numbers of solutions. The goal is to analyze the status of the problem with respect to its ``advice complexity'' for a large number of advice bits. Note that computing a large number of solutions is not practical, but the negative part of the result could be seen as motivation for using a very small number of solutions or advice bits. The result of the current section is shown for $m = 2$ identical machines, since two machines is the main topic of the article.
Note that for any number of identical machines, and even for any number of uniformly related machines, if the optimal makespan is known, one can obtain an optimal offline solution in an online fashion \cite{McN,Ep03,ES09a}. For the case of identical machines, one can use the algorithm of McNaughton \cite{McN}, which is very simple. Given $T=\max\{p^{\max},\frac{W}m\}$, where $p^{\max}$ is the maximum job size (and $W$ is the total size of all jobs, as defined earlier), it allocates the time interval $[0,T)$ on every machine, and assigns jobs one by one to consecutive time slots, moving to the next machine after the current machine is fully occupied during the allocated time internal, as long as not all jobs are assigned, and possibly cutting a job into two parts in the case that it was assigned partially to a machine before moving to the next machine. We will use this property (that $T$ is the optimal offline makespan) below.
For non-preemptive scheduling, the knowledge of the optimal offline cost does not allow an online algorithm to compute an optimal offline solution even for two identical machines \cite{AzarR01}.

\begin{proposition}\label{propo1}
For any fixed number of parallel solutions, $M \geq 2$, any online algorithm has competitive ratio $1+\Omega(\frac 1M)$, even if input jobs arrive sorted by non-increasing sizes.
\end{proposition}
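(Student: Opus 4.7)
The plan is to design, for every $M\ge 2$, a sorted-input adaptive adversary that forces every online algorithm with $M$ parallel solutions to have competitive ratio at least $1+c/M$ for an absolute constant $c>0$. The strategy exploits the remark preceding the proposition, that knowing $\OPT$ suffices to achieve it online; the only obstruction for the algorithm is therefore its uncertainty about $\OPT$, which $M$ parallel solutions can ``cover'' only with granularity $\Theta(1/M)$.

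First the adversary releases a single job of size~$1$. Each of the $M$ solutions is now characterized (up to machine symmetry) by a single parameter $u_i\in[1/2,1]$: the total length of the time intervals during which machine~$1$ is busy with this first job in solution~$i$. Since $M$ numbers lie in an interval of length $1/2$, a pigeonhole argument yields a value $u^*\in[1/2,1]$ with $|u^*-u_i|\ge \frac{1}{4(M+1)}=\Omega(1/M)$ for every $i$.

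The adversary then appends, still in non-increasing order, a short carefully tuned list of mid-sized jobs followed by a tail of very small jobs, arranged so that the total work is slightly above $2$, $\OPT\approx 1$, and every schedule achieving makespan equal to $\OPT$ must, up to swapping the two machines, have placed the first job with $u_i$ essentially equal to~$u^*$. Since each $u_i$ is $\Omega(1/M)$ away from $u^*$ by construction, the mid-sized jobs in any solution either cannot be placed within $[0,1)$ without a parallel conflict or leave an idle gap of length $\Omega(1/M)$ on one of the machines before time~$1$. Presenting one further small job at the end of the sequence forces every solution's makespan above $1+\Omega(1/M)$ while $\OPT$ rises only by $O(1/M^2)$, yielding competitive ratio at least $1+\Omega(1/M)$.

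The main obstacle will be showing that the continuation genuinely pins $u^*$ down to the required precision. The essential ingredient is the no-parallel-execution constraint on preemption: choosing the mid-sized jobs with sizes close to $u^*$ and $1-u^*$, their only $\OPT$-compatible placements are as exact complements of the first-job intervals, and any deviation of $u_i$ from $u^*$ creates either an overlap that violates the no-parallel rule or a forced idle gap of length $|u_i-u^*|$. Combining the resulting gap with one additional tiny job forced to spill past time~$1$ in every solution produces the quantitative $1+\Omega(1/M)$ bound.
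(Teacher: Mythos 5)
There is a fatal flaw at the very first step: a single opening job of size $1$ does not commit the algorithm to anything the adversary can exploit. Since one job can never run on both machines simultaneously, in every solution the time during which both machines are busy is $0$ after the first job, and at every instant of $[0,1)$ exactly one machine is free. Consequently, \emph{any} continuation consisting of jobs of total size at most $1$ (each of size at most $1$) can be packed preemptively into $[0,1)$ by running each new job on whichever machine the first job is not currently using; this is a valid preemptive schedule and yields makespan exactly $1=\OPT$ for \emph{every} solution, no matter what the split $u_i$ is. In particular, your two mid-sized jobs of sizes near $u^*$ and $1-u^*$ can always be placed as exact complements of job~$1$'s intervals in every solution, so no idle gap and no spill past time $1$ is forced, and the competitive ratio for your continuation is $1$. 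The parameter $u_i$ is therefore not an irrevocable commitment (and, separately, it does not characterize a solution even up to machine symmetry, since the actual interval structure matters). This is precisely why the paper's proof opens with \emph{two} equal jobs and parametrizes each solution by $b_i$, the total length of time during which both machines are simultaneously busy: that quantity is a genuine commitment, because the third job provably cannot be scheduled during those $b_i$ time units, giving the lower bound $\max\{b_i+p,\,2M+2-b_i\}$ on the maximum completion time that drives the pigeonhole argument.

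Beyond this, the step you yourself flag as ``the main obstacle'' --- showing that the continuation pins the schedule down to $u_i\approx u^*$ --- is exactly the part that is false and is never carried out; the proposal is a plan whose central claim fails. To repair it you would need to replace the one-job prefix by a prefix that forces a nonzero, solution-dependent amount of doubly-busy time (as the paper does with two jobs of size $M+1$), apply the pigeonhole to that quantity, and then choose the third job's size so that it is smaller than the first two (the paper achieves this with the gap parameter $\alpha=\tfrac12$ and the auxiliary claim $b_{j+1}\ge\tfrac{M+2}{2}$) in order to respect the non-increasing order.
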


The proof is based on the next idea. The number of solutions is finite, but by considering an infinite number of possibilities for the third job (which arrives after two identical jobs), the number of possible offline costs (for balanced optimal solutions) is infinite, since the problem is preemptive. Ordering the solutions by a feature of their action on the first two jobs, we find a gap in the solution set with respect to the possible maximum completion time, and show that no solution will have a maximum completion time that is sufficiently close to the optimal makespan.

\medskip

\noindent\begin{proof}
We start with the unsorted case, and define an input based on the action of all $M$ solutions. The first two jobs have sizes of $M+1$ each. For each solution out of the $M$ solutions, consider the total length of time intervals during which both machines are assigned to run a job or a part of a job  (that is, times when both machines are occupied simultaneously). Let these $M$ values be $b_1,b_2\ldots,b_M$, where $0 \leq b_i \leq M+1$ for $1\leq i \leq M$, where $b_i$ corresponds to the $i$th solution. We use the notation $b_0=0$ and $b_{M+1}=M+1$, and assume without loss of generality that the values are sorted, i.e., $$0=b_0\leq b_1\leq b_2 \leq \cdots \leq b_M \leq b_{M+1}=M+1 \ . $$ By the pigeonhole principle, and since $b_{M+1}-b_0=M+1$, there is a value $j'$ where $0\leq j'\leq M$ such that $b_{j'+1}-b_{j'} \geq 1$.
We use a parameter $\alpha$, where $0<\alpha \leq 1$, and find the largest $j \leq M$ such that $b_{j+1}-b_{j} \geq \alpha$, which has to exist due to the previous claim.

Let $p=2M+2-2\cdot b_{j+1}+\alpha$ be the size of the third job, which is presented next. We have $b_{j+1} \geq b_j+\alpha \geq \alpha$ and $$ 2\cdot b_j +\alpha \leq b_j+b_{j+1} \leq 2\cdot b_{j+1}-\alpha \ . $$
Thus, by $b_{j+1} \leq M+1$, we get  $p \geq 2M+2-2(M+1)+\alpha>0$ and $$p \leq 2(M+1)-(b_j+b_{j+1}) \leq 2(M+1)-2\cdot b_j-\alpha <2(M+1) \ . $$

The current total size of input jobs is $2(M+1)+p$, and thus the optimal offline makespan is $\max\{M+1,p,M+1+\frac p2 \}=M+1+\frac p2$. For solution $i$, the maximum completion time is at least $\max\{b_i+p,2M+2-b_i\}$, since the third job cannot be assigned during the time when two machines are active, and since the total time when exactly one machine is running a job is $2M+2-2\cdot b_i$ (since the time that both machines are running jobs is $b_i$). In fact, $2M+2-2\cdot b_i$ is a lower bound on the maximum completion time for solution $i$ even before the assignment of the job of size $p$.
For $i\geq j+1$, we have $b_i\geq b_{j+1}$, and the maximum completion time for this solution is at least $b_i+p \geq b_{j+1}+p = 2M+2-b_{j+1}+\alpha$.
For $i\leq j$, we have $b_i\leq b_{j}$, and $2M +2- b_i \geq 2M+2 -b_j \geq 2M+2-b_{j+1}+\alpha$ as well. Thus, the best solution out of the $M$ solutions has at least this maximum completion time, and this is a lower bound on the makespan.

The optimal offline cost is $$M+1+\frac p2 = M+1+(M+1-b_{j+1}+\frac{\alpha}2) \ .$$
The ratio between the costs is at least $$\frac{2M+2-b_{j+1}+\alpha}{2M+2-b_{j+1}+\frac{\alpha}2}=1+\frac{\alpha/2}{2M+2-b_{j+1}+\alpha/2} \ .$$ By using $\alpha=1$, and $b_{j+1} \geq \alpha=1$,  we get a ratio of at least $1+\frac{1}{4M+3}$.

We can slightly adapt the proof and show that a similar property holds even if jobs are presented sorted in non-increasing order. The input still consists of two jobs of size $M+1$ and a job of size $p$. This time we use $\alpha=\frac 12$ for the proof, and we show that as a result it will hold that $p<M+1$, and the input is sorted.
If $b_{j+1} \geq \frac{M+2}2$, we have $2\cdot b_{j+1} -\alpha > M+1$, and thus $p < M+1$, so the same calculation is used, and the resulting ratio (which is found by substituting the new value of $\alpha$ and the bound on $b_{j+1}$) is at least $1+\frac{1}{6M+5}$.

To complete the proof, we show that $b_{j+1} \geq \frac{M+2}2$ holds for any $j\in\{0,1,\ldots,M\}$.
For $j=M$, we have $b_{j+1}=M+1 >\frac{M+2}2$ for any $M \geq 2$, and therefore we consider the case $0 \leq j\leq M-1$. Assume by contradiction that $b_{j+1} < \frac{M+2}2$. By the choice of $j$, we have $b_{j'+1}-b_{j'} < \frac 12$ for $j'\geq j+1$. Thus, by $j<M$, the next sum is non-empty, and the inequality is strict.
Specifically, we have $$\sum_{j'=j+1}^{M} (b_{j'+1}-b_{j'}) < \frac{M-j}2  \ . $$ By rearranging the left hand side we get $$\sum_{j'=j+1}^{M} (b_{j'+1}-b_{j'})= b_{M+1}-b_{j+1} =M+1-b_{j+1} > M+1-\frac{M+2}2=\frac{M}{2} \ . $$ Thus, $\frac{M}{2}  < \frac{M-j}2$, a contradiction since $j \geq 0$.
\end{proof}

We now design an algorithm of approximation ratio $1+\del$ for any $0<\del\leq 1$ such that $\frac{1}{\del}$ is an integer.
The idea is to keep a set of solutions with very different allowed maximum completion times, but such that the set of their costs is dense. When a solution becomes too small in this sense compared to the current optimal makespan, its allowed maximum completion time is increased by a large factor, keeping the structure of the set of solutions. When the allowed cost is increased, a new schedule is started. Every solution may have older schedules as well, but those old schedules contribute very little to the machine completion times.

Let $\eps=\frac{\del}3$. The algorithm uses $\frac{1}{\eps^2}=\frac{9}{\del^2}$ solutions, where $\frac{1}{\eps}=\frac{3}{\del}$ is an integer as well.
The algorithm has an invariant that it always has at least one solution whose maximum completion time is at most $1+\del$ times the makespan of an optimal offline solution, that is, at most $1+\del$ times the maximum between the current largest job and the current average machine load. Every solution has a length $L$ associated with it, and its maximum completion time will be at most $(1+\eps)\cdot L$. More specifically, this solution will be composed of a current sub-solution and possibly previous sub-solutions, where the previous sub-solutions require time intervals of at most $\eps\cdot L$ in total, and the current sub-solution requires a time interval of at most $L$. The sub-solutions are concatenated, such that when a new sub-solution is added, the schedule it is started at a certain time which is the end of the previous schedule, and there is no overlap in time with previous sub-solutions. Every current sub-solution is found by applying the algorithm of McNaughton \cite{McN}, that is, jobs are assigned consecutively to the first machine in the time interval of the sub-solution, and the schedule continues on the second machine during the same time interval. This can be done in the case where the optimal makespan does not exceed $L$.

When the first job is presented, schedules with lengths $p_1\cdot (1+\eps)^i$ are created for $i=0,1,\ldots,\frac{1}{\eps^2}-1$. At this time, the schedule of length $p_1$ is optimal, since the optimal offline makespan at that time, $OPT_1$, is equal to $p_1$. For any new job $j$, the following is done. First, the new optimal offline makespan $OPT_j$ is computed. For every schedule whose current length $L$ is smaller than $OPT_j$, a new current length is defined as $L\cdot (1+\eps)^{1/\eps^2}$. The set of lengths will always correspond to $p_1$ multiplied by powers of $1+\eps$, such that these powers are consecutive integers. If at some time the powers are $i_1,i_1+1,\ldots, \frac{1}{\eps^2}+i_1-1$, and $p_1(1+\eps)^{i_2-1}<OPT_j$ while $p_1(1+\eps)^{i_2} \geq OPT_j$ (for some $i_2>i_1$ such that $i_2 \leq \frac{1}{\eps^2}+i_1-1$), the modification is applied, so that the powers become $i_2,i_2+1,\ldots, \frac{1}{\eps^2}+i_2-1$, since for the powers $i_1,i_1+1,\ldots i_2-1$, the length of the current sub-solution increases by a multiplicative factor of $(1+\eps)^{1/\eps^2}$.
The other solution lengths are unchanged, and in particular, if $OPT_j \leq p_1(1+\eps)^{i_1}$, all the lengths of current solutions are unchanged. If all lengths are smaller than $OPT_j$, all powers of $1+\eps$ are increased by an additive term of $\frac{1}{\eps^2}$. Every time that a length of a solution is increased, a new sub-solution is started.  This process may be applied multiple times before job $j$ is assigned (intuitively, if $j$ is very large, this will happen a large number of times), which may happen if $p_1(1+\eps)^{ \frac{1}{\eps^2}+i_1}<OPT_j$, and in the analysis we allow empty subsets of jobs to be assigned to sub-solutions, if necessary.

For any solution, taking all sub-solutions into account, if its current length is $L$, its maximum completion time is at most $$L\cdot (1+(1+\eps)^{-1/\eps^2}+(1+\eps)^{1/\eps^4}+\ldots) < \frac{L}{1-(1+\eps)^{-1/\eps^2}} \leq L\cdot(1+\eps) \ ,$$ where the last inequality is explained next. For any positive integer $k$ and $\eps>0$, we have $(1+\eps)^k>1+k\cdot \eps$, and therefore $(1+\eps)^{1/\eps^2} > 1+\frac 1{\eps}$ which implies $1-(1+\eps)^{-1/\eps^2} > 1-\frac 1{1+1/\eps}=\frac{1}{1+\eps}$.

The property for the makespan (of the best solution) holds by induction as follows. Recall that after the arrival of the first job, we have $OPT_1=p_1$, and thus there is a solution of length $p_1$, whose maximum completion time is exactly $p_1$. As long as the set of solutions is unchanged, this solution is still optimal. Once there is a change, the solution with the smallest length $L_s$ (in the new set of solutions) satisfies $\frac{L_s}{1+\eps} < OPT_{j_1}$ for the job $j_1$ that caused the modification. This holds since the smaller power of $1+\eps$ was too small, and the first power that was not too small was kept. Consider a job $j_2$, where $j_2 \geq j_1$, that arrived after the modification for $j_1$ and before any other modification. We have $\frac{L_s}{1+\eps} < OPT_{j_1} \leq OPT_{j_2}$. The maximum completion time for this solution is at most $L_s\cdot (1+\eps)$, while the optimal offline makespan is at least $\frac{L_s}{1+\eps}$. Thus, its approximation ratio is at most $$(1+\eps)^2 =(1+\del/3)^2=1+2\cdot \frac \del 3 +\frac{\del^2}3 \leq 1+\del \ .$$

\medskip

We summarize with the next theorem, where the lower bound is based on Proposition \ref{propo1}, and the upper bound follows from the last algorithm and its analysis.
\begin{theorem}
For any $\veps>0$, there is an algorithm of competitive ratio at most $1+\veps$ that uses $O(\frac{1}{\veps^2})$ solutions. Any algorithm of competitive ratio at most $1+\veps$ requires $\Omega(\frac{1}{\veps})$ solutions.
\end{theorem}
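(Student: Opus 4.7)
The plan is to assemble the theorem directly from the two pieces built up earlier in the section. For the lower bound, I would simply specialize Proposition \ref{propo1}. That proposition establishes that any algorithm with $M$ parallel solutions has competitive ratio at least $1+\frac{1}{4M+3}$ (even in the unsorted-input version, and $1+\frac{1}{6M+5}$ in the sorted version). Thus if the competitive ratio is at most $1+\veps$, we must have $\veps \geq \frac{1}{4M+3}$, which rearranges to $M \geq \frac{1}{4\veps}-\frac{3}{4} = \Omega(1/\veps)$.

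For the upper bound, I would invoke the algorithm constructed just above the theorem statement. Given $\veps>0$, I would first reduce to the case $\veps \leq 1$ (otherwise a single solution obtained from the McNaughton-type construction on $\max\{p^{\max},W/2\}$ suffices). Then I would set $\del := 1/\lceil 1/\veps \rceil$, so that $1/\del$ is a positive integer and $\del \leq \veps$, and feed this $\del$ into the algorithm. The algorithm then maintains $1/\eps^2 = 9/\del^2$ parallel solutions, where $\eps = \del/3$, which is $O(1/\veps^2)$ as required.

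The competitive-ratio guarantee is exactly the invariant already verified in the preceding construction: at every moment between two consecutive updates, the solution with the smallest current length $L_s$ satisfies $L_s/(1+\eps) < \OPT_j$ by the way lengths are rounded up in powers of $1+\eps$, while by the geometric-series calculation its maximum completion time (over the current sub-solution plus all older sub-solutions) is at most $L_s(1+\eps)$. Hence its approximation ratio is at most $(1+\eps)^2 = (1+\del/3)^2 \leq 1+\del \leq 1+\veps$, completing the upper bound.

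The main subtlety — really the only one — is the bookkeeping to make $1/\del$ an integer and to ensure that the reduction in $\del$ does not blow up the number of solutions; rounding $\veps$ down to the nearest reciprocal of an integer handles both issues, since $1/\lceil 1/\veps \rceil \geq \veps/2$ so that $1/\del^2 = O(1/\veps^2)$. No new algorithmic content beyond Proposition \ref{propo1} and the construction preceding the theorem is needed.
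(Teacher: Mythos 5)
Your proposal is correct and follows essentially the same route as the paper, which likewise obtains the lower bound by specializing Proposition \ref{propo1} and the upper bound by invoking the preceding algorithm with its $(1+\del)$ guarantee on $9/\del^2$ solutions. The only content you add is the explicit rounding $\del=1/\lceil 1/\veps\rceil$ to make $1/\del$ an integer, which is a harmless and correct piece of bookkeeping the paper leaves implicit.
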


Note that this algorithm can be easily generalized for $m$ identical machines by applying NcNaughton's algorithm \cite{McN} for multiple identical machines, and it can be generalized to the case of uniformly related machines by using the algorithm of Ebenlendr and Sgall \cite{ES09a}. The only required property is the existence of an optimal semi-online algorithm (with competitive ratio $1$) for the case where the optimal offline makespan is given.

\section{Two identical machines}\label{unso}
In this section and the next section we provide our main results, which are algorithms for two machines with two solutions, for the preemptive variants which are the case of general inputs and the case of non-increasing sizes. We start with the case of inputs with arbitrary job sizes.

Let $\alpha$ be the positive solution of $\alpha^2+\alpha-1=0$, that is, $\alpha=\frac{\sqrt{5}-1}2 \approx 0.618$, where $\alpha=\varphi-1$ for $\varphi=\frac{\sqrt{5}+1}2 \approx 1.618$.
Let $R=2\cdot \alpha =\sqrt{5}-1\approx 1.236$ (and $R^2 \approx 1.527864$), where $R^2+2R-4=0$ (so we have $R+2=\frac 4R$). By these definitions, we also have $R=\frac{2}{\varphi}=2(\varphi-1)$. We will use some straightforward properties of $\varphi$ in the calculations, and in particular the properties $\varphi^2=\varphi+1$, $\varphi-1=\frac{1}{\varphi}$, and $2\varphi+1=\varphi^3$.

We start with a lower bound on the competitive ratio.

\begin{proposition}\label{propo2}
The competitive ratio of any algorithm for $m=2$ identical machines is at least $R$.
\end{proposition}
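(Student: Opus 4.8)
The plan is to use an adaptive adversary whose first two jobs have size $1$ and to follow, for each of the two maintained solutions, a single scalar: the quantity $b_i$ equal to the total length of time during which \emph{both} machines of solution $i$ are simultaneously running (parts of) jobs. Two elementary facts drive everything. First, after any prefix of jobs of total size $W$, a solution with both-busy time $\beta$ has maximum completion time at least $W-\beta$ (the $W-2\beta$ units of work not run during the both-busy periods occupy a single machine, at rate one). Second, when a further job of size $s$ arrives, the new maximum completion time is at least $\max\{\,(\text{old maximum completion time}),\ \beta+s\,\}$, since the new job cannot run during the old both-busy periods nor in parallel with itself, and the both-busy time can increase by at most $s$. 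Applied after jobs $1,1$ (where $\OPT=1$): a solution with $b_i$ has maximum completion time $\ge 2-b_i$, so if $b_i<2-R$ it is already off by more than a factor $R$.

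First I would clear away the easy configurations; assume w.l.o.g.\ $b_1\le b_2$. If $b_2<2-R$, stop after two jobs — both solutions cost more than $R=R\cdot\OPT$. If $b_1>2R-2$ (and $b_2\ge 2-R$), present a third job of size $2$ and stop: now $\OPT=2$ and each solution costs at least $b_i+2\ge b_1+2>2R$; here one uses $2R-2<2-R$, which is exactly $R<\tfrac43$. This leaves the case $b_1\le 2R-2<2-R\le b_2$, i.e.\ one ``imbalanced'' solution ($1$) and one ``balanced'' solution ($2$). Now present a third job of size $p:=(2-R-b_1)/\alpha$, which is positive and at most $R<2$, so for $1,1,p$ we have $\OPT=1+p/2=(2-b_1)/R$ (using $2\alpha=R$). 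Because $b_1+p=R-\alpha b_1\le 2-b_1$ and $2-b_1\ge \OPT\cdot R$, solution $1$ necessarily costs at least $2-b_1=R\cdot\OPT$ after this job, no matter how it places $p$; and a short computation with $\alpha^2+\alpha-1=0$ shows solution $2$ stays within factor $R$ \emph{iff} $b_2\le 2-R+\alpha b_1$. So if $b_2>2-R+\alpha b_1$ we stop after three jobs (solution $1$ exactly at $R$, solution $2$ above it) and are done.

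The remaining, genuinely hard case is $b_1\le 2R-2$ and $2-R<b_2\le 2-R+\alpha b_1$ (in particular $b_1>0$), and here I would present one more job of size $q$ tuned, after observing the new both-busy times, to the golden-ratio constants. The core difficulty is the algorithm's residual freedom: after the third job solution $i$ may hold any both-busy time $\beta_i$ in an interval, where a smaller $\beta_i$ means a larger \emph{current} maximum completion time but makes the solution less vulnerable to a large fourth job — a real trade-off the adversary must beat uniformly in $\beta_1$ and $\beta_2$. Concretely one writes the set of ``surviving'' fourth-job sizes of solution $i$ as an explicit interval in $\beta_i$ (of the form $\bigl((R-\alpha b_1-\beta_i)/\alpha,\ ((2-b_1)-\beta_i)/\alpha^2\bigr)$, using $r(b_1)+b_1=R-\alpha b_1$), together with the large/small values of $q$ at which the ratio is automatically below $R$ and hence not threatening; the claim to prove is that for no admissible pair $(\beta_1,\beta_2)$ do these two intervals plus the non-threatening range cover every threatening $q$, so the adversary can pick an uncovered $q$ and force the ratio up to exactly $R$ on both solutions. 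I expect this last step — the two-parameter covering argument, in particular ruling out that the algorithm stretches $\beta_1$ and $\beta_2$ to opposite ends of their feasible ranges and thereby covers the whole threatening window — to be the main obstacle, and it may need a final split according to whether $b_2$ sits at the extreme value $2-R$ or strictly above it (the boundary case being the one where three jobs already suffice).
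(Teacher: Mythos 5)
Your first three cases check out: the both-busy accounting is sound, stopping after the two unit jobs when $b_2<2-R$ is correct, the size-$2$ job disposes of $b_1>2R-2$, and the third job $p=(2-R-b_1)/\alpha$ does pin solution $1$ at cost at least $2-b_1=R\cdot\OPT$ while eliminating solution $2$ whenever $b_2>2-R+\alpha b_1$. But the proof is not complete. The case $b_1\le 2R-2$ and $2-R\le b_2\le 2-R+\alpha b_1$ is exactly where the whole difficulty lives, and you leave it as a conjectured ``two-parameter covering argument'' over the residual both-busy times $(\beta_1,\beta_2)$ after the third job. As you yourself observe, each solution can trade its current maximum completion time against vulnerability to the next job, and nothing in your sketch rules out that the two surviving windows of fourth-job sizes, together with the harmless range, jointly cover every threatening $q$. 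Until that is proved, the lower bound is not established; this is a genuine gap, not a routine verification.

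The paper closes this exact difficulty with a different device that is worth adopting: it does not hunt for a single continuation that defeats both solutions simultaneously. Instead it fixes the continuation in advance (after initial mass of total size $1$, a job of size $\alpha$, possibly followed by a job of size $1+\alpha=\varphi$), shows that the more both-busy solution already exceeds ratio $R$ on both extended inputs (its cost is at least $t_2+\alpha>1$ and then $t_2+\alpha+1>2$, while $\OPT$ equals $\varphi/2$ and then $\varphi$, and $2/\varphi=R$), and then writes down the two competitive-ratio constraints that the surviving solution must satisfy: $u_1\le r\cdot\frac{1+\alpha}{2}$ after the job of size $\alpha$, and $u_2+\alpha+1\le r\cdot(1+\alpha)$ after the job of size $1+\alpha$. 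Summing these and using the conservation law $u_1+u_2=1+\alpha$ eliminates the algorithm's remaining freedom entirely and yields $r\ge\frac43>R$, a contradiction. That summation step is precisely the tool that replaces your covering argument; without it (or a completed version of your covering analysis) the final case remains open.
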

\begin{proof}
Assume by contradiction that there is an algorithm whose competitive ratio is $r<R$.
We will use three inputs, where the second input extends the first one, and the third input extends the second one. Obviously, the competitive ratio has to be satisfied for all three inputs.

The inputs start with very small jobs (sand) of total size $1$. For this input, an optimal offline solution assigns jobs of total size $\frac 12$ to each machine, for a makespan of $\frac 12$. Consider two solutions, $T$ and $X$, constructed by an online algorithm, and let $t_1$, $t_2$, $x_1$, and $x_2$ denote the following total lengths of time intervals for the two solutions. The values $t_1$ and $x_1$ are the total lengths of intervals where at least one machine is active, that is, assigned to run a job, for $T$ and $X$, respectively. We assume without loss of generality that $t_1 \leq x_1$. The values $t_2$ and $x_2$ are the total lengths of intervals where two machines are active, for $T$ and $X$, respectively. Thus, the times where exactly one machine is active are $t_1-t_2$ and $x_1-x_2$, for $T$ and $X$, respectively. The makespan is at least $t_1$, since the maximum completion time for $T$ is at least $t_1$, and for $X$, the maximum completion time is at least $x_1\geq t_1$.
Since the total size of jobs is $1$, and the total size of jobs assigned to run is $2\cdot t_2+(t_1-t_2)=t_1+t_2$ for the first solution, and similarly, $x_1+x_2$ for the second solution, we have $t_1+t_2=x_1+x_2=1$.

Since the makespan of the algorithm for the first input, which is based on the first solution, is at least $t_1$, we have $t_1\leq r \cdot \frac 12 = \frac r2$. We find that $t_2=1-t_1\geq 1-\frac r2$. The input continues with a job $j_1$ of size $\alpha$, which is possibly followed with a job $j_2$ of size $1+\alpha=\varphi$. The input with two jobs is the second input, and the input with three jobs is the third input.
The total size of jobs after the arrival of $j_1$ is $1+\alpha=\varphi$, and after the arrival of $j_2$ the total size is $2\varphi$.
The optimal offline makespan after $j_1$ is presented (that is, for the second input) is $\frac{1+\alpha}2$, since   $\frac{1+\alpha}2 > \alpha$, and the optimal offline makespan after $j_2$ is presented is $1+\alpha$. For the first solution, the maximum completion time is at least $t_2+\alpha$ after $j_1$ is presented, and it is at least $t_2+\alpha+1$ after the second job $j_2$ is presented, since there is a total time interval of $t_2$ where no additional job can be assigned after the sand jobs were assigned.
We have $t_2+\alpha\geq 1-\frac{r}2+\alpha > 1-\frac R2+\alpha=1$ (by $t_2\geq 1-\frac r2$ and $\alpha=\frac{R}2$), and by $t_2+\alpha>1$ we also have $t_2+\alpha+1 > 2$.

Assume first that the makespan is achieved for the first solution $T$ for some input. In this case, the competitive ratio for the two inputs above (for the sand together with one job, and for the sand together with two jobs) exceeds $$\frac{2}{1+\alpha}=\frac{2}{1+\frac R2}=\frac{4}{2+R}=R>r \ , $$ by the properties of $R$. Thus, since the competitive ratio may not exceed $r$, the makespan is achieved for the other solution $X$ for both inputs (the input ending with $j_1$, and the input ending with $j_2$). It is left to consider $X$.

Let $u_1$ and $u_2$ be defined for solution $X$, analogously to the previous definitions, after $j_1$ is presented (so $u_1+u_2=1+\alpha$). The makespan after $j_1$ is presented is at least $u_1$, and we get that $u_1 \leq r \cdot \frac{1+\alpha}2$. The makespan for this solution after $j_2$ is presented (that is, for the third input) is at least $u_2+\alpha+1$, since both machines of $X$ are already occupied during a total time of $u_2$ before the assignment of $j_2$, and thus using the competitive ratio for the third input we have $u_2+\alpha+1 \leq r \cdot (1+\alpha)$. Taking the sum of the two inequalities gives $u_1+u_2+\alpha+1 \leq r \cdot \frac 32 \cdot (1+\alpha)$. That is, by $u_1+u_2=1+\alpha$, we get $2(1+\alpha) \leq r \cdot \frac 32 \cdot (1+\alpha)$, which implies $r \geq \frac 43$, a contradiction since $r<R<\frac 43$.
\end{proof}

\medskip
Next, we design an algorithm for two identical machines with two solutions. The algorithm does not introduce any idle time, which we show in the proof.
There are three cases. In the first case, the new job is relatively small, so it is added to each one of the solutions and the properties are kept, and the suitable balance of each solution (a fixed ratio between the two loads) is kept if the solution was balanced according to the required ratio between the two machine loads, or it becomes more balanced. In the second case, the job has intermediate size. It is assigned such that the solutions are swapped and become balanced. In the third case, the job is so large that first a part of it is assigned as if it has intermediate size, and then the residue is added, possibly making the machines of the solutions imbalanced.

We use the following notation. The algorithm will have two solutions denoted by $A$ and $B$, where $A$ will be seen as the better solution, and the maximum completion time of $A$ will be analyzed towards the competitive ratio, so we use the property that the makespan is at most the maximum completion time for $A$. The loads for $A$ after job $j$ was assigned are denoted by $a_1^j$ and $a_2^j$, where $a_2^j \leq a_1^j$ will hold for the two machines. For $B$, the loads are $b_1^j$ and $b_2^j $, where $b_2^j \leq b_1^j$ will hold. Specifically, the loads $a_1^j$ and $b_1^j$ are always the loads for the first machine, and the loads $a_2^j$ and $b_2^j$ are always the loads of the second machine. The makespan after job $j$ is scheduled will be at most $a_1^j$ (this holds by definition regardless of the value $b_1^j$, but we will enforce the property $b_1^j > a_1^j$).
The property of ordered loads of the two machines in both solutions will be implied by the invariants, which will be proved via induction.

Next, we discuss the parameters. As we already mentioned several times, the algorithm tries to maintain a certain ratio between machine loads for each of the two solutions. This is a ratio of $\varphi$ for $A$ and of $2\varphi$ for $B$. If the input contains a very large job whose size is larger than all previous jobs in total,  the ratio may be larger. In the analysis, we will introduce invariants for $A$ and $B$, so that we can ensure that the two solutions always have the required properties.

When a new job that is relatively large is assigned, solutions $A$ and $B$ will be swapped, that is, $A$ will become $B$ and $B$ will become $A$ after the assignment. This does not require any actual action, but later assignments will be based on this swap. In other cases, when the new job is smaller, the roles of $A$ and $B$ will not be swapped. We have $a_1^0=a_2^0=b_1^0=b_2^0$.
We also use the notation $W_j$ for the total size of the first $j$ jobs (where $W_0=0$). Since the algorithm will not introduce idle time, it will always be the case that $a_1^j+a_2^j=W_j$ and $b_1^j+b_2^j=W_j$. As we discussed earlier, the optimal offline cost (or makespan) after $j$ jobs have arrived is $OPT_j=\max\{\frac{W_j}2, p_{\max}^j\}$, where $p_{\max}^j=\max_{j'=1,2,\ldots,j} p_{j'}$.

We will provide the analysis of each one of the cases after all cases are presented. In each case we ensure not only that the properties hold after the assignment, but also that the assignment is valid.

\noindent{\bf The algorithm.}

\smallskip

\noindent{\bf Case 1.} In this case we assume that $p_j \leq W_j \cdot (2-\varphi)$, where $2-\varphi=\frac{1}{\varphi^2}\approx 0.381966$.
In this case the solutions are not swapped. We describe the assignment for each one of the solutions. The algorithm tries to assign the job such that the ratio between the machine completion times is kept, if the ratio held precisely before the arrival of $j$. The ratio will never be closer to $1$, but it may be larger than the required ratio. If the ratio was violated due to an earlier large job, the algorithm tries to correct it completely or partially, in terms of obtaining the ratio exactly, or obtaining a ratio that is closer to the required ratio. The correction is partial, if the job is too small to perform a complete correction.

\medskip

{\bf Case 1.1 for solution $\boldsymbol{A}$.} In this case we assume that we additionally have $a_2^{j-1}+p_j \leq \frac{W_j}{\varphi^2}$.
Job $j$ is assigned to the second machine completely during $[a_2^{j-1},a_2^{j-1}+p_j)$.

\medskip

{\bf Case 1.2 for solution $\boldsymbol{A}$.}
If case 1.1 was not applied, job $j$ is assigned to the second machine during $[a_2^{j-1},\frac{W_j}{\varphi^2})$, and to the first machine during $[a_1^{j-1},\frac{W_j}{\varphi})$.

\medskip

{\bf Case 1.1 for solution $\boldsymbol{B}$.} In this case we assume that we additionally have $b_2^{j-1}+p_j \leq \frac{W_j}{\varphi^3}$.
Job $j$ is assigned to the second machine during $[b_2^{j-1},b_2^{j-1}+p_j)$.

\medskip

{\bf Case 1.2 for solution $\boldsymbol{B}$.}
If case 1.1 was not applied, job $j$ is assigned to the second machine during $[b_2^{j-1},\frac{W_j}{\varphi^3})$, and to the first machine during $[b_1^{j-1},\frac{2W_j}{\varphi^2})$.

\medskip

\noindent{\bf Case 2.} In this case $W_j \cdot (2-\varphi) < p_j \leq \frac{W_j}2$.
The solutions will be swapped, and the ratio will be attained precisely.

For solution $A$ that will become solution $B$, job $j$ is assigned to the second machine during $[a_2^{j-1},\frac{W_j}{\varphi^3})$, and to the first machine during $[a_1^{j-1},\frac{2W_j}{\varphi^2})$.
For solution $B$ that will become solution $A$, job $j$ is assigned to the second machine during $[b_2^{j-1},\frac{W_j}{\varphi^2})$, and to the first machine during $[b_1^{j-1},\frac{W_j}{\varphi})$.

\medskip

\noindent{\bf Case 3.} In this case $p_j > \frac{W_j}2$ (or equivalently $p_j>W_{j-1}$, where the equivalence is shown in the proof, that is, $j$ is larger than all previously arrived jobs together).
The solutions will be swapped in this situation as well.

The assignment is done as follows. First, a fake job of size $W_{j-1}$ is assigned using case 2. The assignment is just a partial assignment of $j$, and we will need to assign another part of size $p_j-W_{j-1}$ of $j$, and this part is assigned to the first machine, during the time interval $$[\frac{2W_{j-1}}{\varphi},\frac{2W_{j-1}}{\varphi}+p_j-W_{j-1})$$ for $A$ and during the time interval $$[\frac{4W_{j-1}}{\varphi^2},\frac{4W_{j-1}}{\varphi^2}+p_j-W_{j-1})   \mbox{ \  for \  } B . $$

This completes the definition of the algorithm. One can see that all possibilities for the size of the new job were considered and the action was defined for all cases. Note that the first job is assigned by the same assignment rules as other jobs. Specifically, it is assigned by case 3, where case 2 is invoked in the empty sense since $W_0=0$, and it is assigned during the time interval $(0,p_j]$ on the first machine for each one of the solutions. This is obviously optimal and does not require a proof, though this case is covered by the proof of case 3.

\noindent{\bf Analysis.}

As mentioned above, we use a collection of properties. Those properties are defined as follows. For solution $A$, we will require that \begin{equation} \frac{W_j}{\varphi} \leq a_1^j \leq  R \cdot OPT_j  \label{n1} \ .\end{equation} Note that in the case where $OPT_j=\frac{W_j}2$ (that is, the case where $W_j \geq 2 \cdot p_{\max}^j$ holds), we will have $ R \cdot OPT_j = \frac2{\varphi} \cdot \frac{W_j}2= \frac{W_j}{\varphi}$. Thus, in this case it will hold that $a_1^j=\frac{W_j}{\varphi}$. Given the invariants and the lack of idle time, we will always have $$a_2^j = W_j-a_1^j \leq W_j-\frac{W_j}{\varphi}=W_j\cdot(1-\frac{1}{\varphi})=\frac{W_j}{\varphi^2} \ , $$ and $a_2^j=\frac{W_j}{\varphi^2}$ holds in the case where $OPT_j=\frac{W_j}2$. Additionally, we will always have $a_1^j>a_2^j$ for $j>0$ (that is, if $W_j>0$), since for $W_j>0$ it holds that $a_1^j \geq \frac{W_j}{\varphi} > \frac{W_j}{\varphi^2} \geq  a_2^j$. Note that $\frac 1{\varphi}\approx 0.618$ and  $\frac 1{\varphi^2}\approx 0.382$.

For solution $B$, we will require that \begin{equation} \frac{2W_j}{\varphi^2} \leq b_1^j \leq  R^2 \cdot OPT_j  \ . \label{n2} \end{equation} In the case where $OPT_j=\frac{W_j}2$, we will have $ R^2 \cdot OPT_j =  \frac{4}{\varphi^2}  \cdot \frac{W_j}2= W_j \cdot \frac{2}{\varphi^2}$. Thus, in this case it will hold that $b_1^j=\frac{2W_j}{\varphi^2}$. We will also have $$b_2^j \leq W_j-\frac{2W_j}{\varphi^2}=W_j\cdot(1-\frac{2}{\varphi^2})=\frac{W_j}{\varphi^3} \ , $$ and $b_2^j=\frac{W_j}{\varphi^3}$ holds in the case where $OPT_j=\frac{W_j}2$. Additionally, we will always have $b_1^j>b_2^j$ for $j>0$, since for $W_j>0$ it holds that $b_1^j \geq \frac{2W_j}{\varphi^2} > \frac{W_j}{\varphi^3} \geq  b_2^j$. Note that $\frac 2{\varphi^2}\approx 0.764$ and  $\frac 1{\varphi^3}\approx 0.236$.

The two required properties, \eqref{n1} and \eqref{n2}, obviously hold before any job is assigned (since all loads are equal to zero and $W_0=OPT_0=0$), and therefore we will prove them using induction, where the base case is $j=0$.

\medskip

We will analyze the algorithm now, and start with case 1.
Note that in case 1 we have $$W_{j-1} =W_j-p_j \geq W_j\cdot(1-(2-\varphi))=(\varphi-1)\cdot W_j$$ and $$W_{j-1} =W_j-p_j \geq p_j(1/(2-\varphi)-1)=\varphi \cdot p_j \ , $$ or equivalently, we have $W_j \leq \varphi \cdot W_{j-1}$ and $p_j \leq \frac{W_{j-1}}{\varphi}$.

\begin{claim}
The assignment of job $j$ in case 1 is valid in the sense that a total size of $p_j$ is assigned, and there is no overlap between the time slots allocated for $p_j$ on the two machines. No idle time is introduced, and the invariants hold after job $j$ is assigned.
\end{claim}
\begin{proof}
For case 1.1 and both solutions, the assignment cannot be invalid since the job is not preempted. No idle time is introduced since the machine that receives the job is the second machine, and the interval allocated for the job starts at time $a_2^{j-1}$ for the first solution and $b_2^{j-1}$ for the second solution. The length of the time interval is equal to the size of the assigned job. In fact, for case 1.2, while the job is preempted and split into two parts, each part is assigned without introducing any idle time for each one of the solutions.

In order to prove all properties excluding the invariants, it is left to analyze the validity of assignment in the remaining cases, and to show that the job is fully assigned in these cases. We show that the two parts of $j$ do not overlap for both solutions, the parts have non-negative sizes, and their total size is equal to $j$.

We start with the assignment for solution $A$.
We show that in case 1.2  the intervals have positive lengths. We have $a_2^{j-1}\leq\frac{W_{j-1}}{\varphi^2}<\frac{W_{j}}{\varphi^2}$, by the invariants for the previous time $j-1$, since $p_j>0$. Additionally,  since case 1.1 not applied, it holds that $a_2^{j-1}+p_j > \frac{W_j}{\varphi^2}$, which implies $$a_1^{j-1}=W_j-p_j-a_2^{j-1} <W_j-\frac{W_j}{\varphi^2}=\frac{W_j}{\varphi} \ ,$$ and thus $a_1^{j-1} \leq \frac{W_{j}}{\varphi}$ holds.
Moreover, $$(\frac{W_j}{\varphi^2}-a_2^{j-1})+(\frac{W_j}{\varphi}-a_1^{j-1})=W_j-W_{j-1}=p_j \ , $$ and $j$ is assigned completely. There is no overlap between the parts since $\frac{W_j}{\varphi^2}\leq a_1^{j-1}$ holds, due to  $\frac{W_j}{\varphi^2} \leq \frac{W_{j-1}}{\varphi}\leq a_1^{j-1}$, by
$W_j \leq \varphi \cdot W_{j-1}$, and by
the invariant for time $j-1$ and the condition of case 1.

Consider now the assignment for solution $B$.
For case 1.2,  the intervals have positive lengths since $b_2^{j-1}\leq\frac{W_{j-1}}{\varphi^3}<\frac{W_{j}}{\varphi^3}$, by the invariants for the previous time $j-1$, and since $p_j>0$. Additionally, $b_1^{j-1} \leq \frac{2W_{j}}{\varphi}$ holds since case 1.1 not applied and thus $b_2^{j-1}+p_j > \frac{W_j}{\varphi^3}$, which implies $$b_1^{j-1}=W_j-p_j-b_2^{j-1} <W_j-\frac{W_j}{\varphi^3}=\frac{2W_j}{\varphi^2} \ .$$ Moreover, $$(\frac{W_j}{\varphi^3}-b_2^{j-1})+(\frac{2W_j}{\varphi^2}-b_1^{j-1})=W_j-W_{j-1}=p_j \ . $$ There is no overlap between the parts  since $\frac{W_j}{\varphi^3}\leq b_1^{j-1}$ holds, due to  $$\frac{W_j}{\varphi^3} \leq \frac{W_{j-1}}{\varphi^2} <  b_1^{j-1} \ , $$ by the invariant for time $j-1$ and the condition of case 1.

We now show that the invariants will hold in all cases after the assignment. For case 1.2, if it is applied for solution $A$, the exact value of $a_1^j$ is $\frac{W_{j}}{\varphi}$, where $R\cdot OPT_j \geq \frac{W_{j}}{\varphi}$. Similarly, if it is applied for solution $B$, the exact value of $b_1^j$ is $\frac{2W_{j}}{\varphi^2}$, where $R^2\cdot OPT_j \geq \frac{2W_{j}}{\varphi^2}$.
For case 1.1, if it is applied for solution $A$, we have $a_1^j=a_1^{j-1}$, and if it is applied for solution $B$, we have $b_1^j=b_1^{j-1}$. By $OPT_j\geq OPT_{j-1}$, it remains to show the lower bounds on the completion times of the first machine for both solutions and case 1.1. If case 1.1 is applied for solution $A$, we have $$a_2^j=a_2^{j-1}+p_j\leq \frac{W_j}{\varphi^2} \ , $$ and thus $a_1^j \geq W_j-\frac{W_j}{\varphi^2}=\frac{W_j}{\varphi}$.
If case 1.1 is applied for solution $B$, we have $b_2^j=b_2^{j-1}+p_j\leq \frac{W_j}{\varphi^3}$, and thus $b_1^j \geq W_j-\frac{W_j}{\varphi^3}=\frac{2W_j}{\varphi^2}$.
\end{proof}

\medskip

We continue with the next case.

\medskip
Note that in this case we have $$\frac{W_j}2 \leq W_{j-1} \leq (\varphi-1)\cdot W_j \ , $$ by $$W_j=W_{j-1}+p_j \leq W_{j-1}+\frac{W_j}2 \mbox{ \ \ \  and \ \ \  } W_{j-1}=W_j-p_j < W_j\cdot (1-(2-\varphi)) \ . $$ We also have $p_j \leq W_{j-1}  \leq  \varphi \cdot p_j$ (by $\frac{\varphi-1}{2-\varphi}=\varphi$),  $W_j \geq \varphi \cdot W_{j-1}$ and $p_j \geq \frac{W_{j-1}}{\varphi}$.

\begin{claim}
The assignment of job $j$ in case 2 is valid in the sense that a total size of $p_j$ is assigned, and there is no overlap between the time slots allocated for $p_j$ on the two machines. No idle time is introduced, and the invariants hold after job $j$ is assigned.
\end{claim}
\begin{proof}
Similarly to case 1, the assignment does not introduce idle time. If indeed the assignment is possible and valid, the invariants will hold. We will show similar properties to those proved in case 1, taking into account the swap of the solutions.

We start with the assignments for solution $A$. To prove $a_2^{j-1} \leq \frac{W_{j}}{\varphi^3}$ (which implies that the intervals have non-negative length), we use the properties $a_2^{j-1} \leq \frac{W_{j-1}}{\varphi^2}$ and $W_{j-1} \leq (\varphi-1)W_j=\frac{W_j}{\varphi}$. To prove $a_1^{j-1} \leq \frac{2W_{j}}{\varphi^2}$, we use $a_1^{j-1} \leq W_{j-1} \leq (\varphi-1) \cdot W_j$ and $\varphi-1 < \frac{2}{\varphi^2}$.
To prove $\frac{W_j}{\varphi^3}\leq a_1^{j-1}$ (so that we can see that there is no overlap between the intervals assigned to one job on the two machines) we use the invariant $a_1^{j-1} \geq \frac{W_{j-1}}{\varphi}$, and the properties $W_{j-1} \geq  \frac{W_j}2$ and $2\cdot \varphi < \varphi^3$. The first two inequalities show that non-negative length intervals are use for $j$, and the third inequality shows that there is no overlap between the part of $j$. Since previously a total size of $W_{j-1}$ was assigned and after the assignment of $j$, a total size of $W_j$ is assigned, $j$ was scheduled completely.

Consider now the assignments for solution $B$. The reasoning regarding the assignment is the same as for $A$.
To prove $b_2^{j-1} \leq \frac{W_{j}}{\varphi^2}$, we use the property $b_2^{j-1} \leq \frac{W_{j-1}}{\varphi^3}$ and $W_{j-1} < W_j$. To prove $b_1^{j-1} \leq \frac{W_{j}}{\varphi}$, we use  $W_{j-1} \leq (\varphi-1)\cdot W_j$, and get
$b_1^{j-1} \leq W_{j-1}  \leq \frac{W_j}{\varphi}$.
To prove $\frac{W_j}{\varphi^2}\leq b_1^{j-1}$, we use the invariant $b_1^{j-1} \geq \frac{2W_{j-1}}{\varphi^2}$ and $W_{j-1} \geq  \frac{W_j}2$.
\end{proof}

\medskip

Finally, we analyze case 3.
By the properties, the loads after the fake job is assigned are based on the total size $2\cdot W_{j-1}$, which allows this assignment.

We have $p_j = W_j -W_{j-1} < 2p_j -W_{j-1}$ and therefore $p_j>W_{j-1}$ holds, that is, $j$ is indeed larger than all previously arrived jobs together.

\begin{claim}
The assignment of job $j$ in case 3 is valid in the sense that a total size of $p_j$ is assigned, and there is no overlap between the time slots allocated for $p_j$ on the two machines. No idle time is introduced, and the invariants hold after job $j$ is assigned.
\end{claim}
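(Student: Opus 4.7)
My plan is to split case 3 into two phases and handle them separately, leaning on the already-proved claim for case 2. Phase 1 is exactly the fake case 2 assignment of a job of size $W_{j-1}$ (together with the solution swap), applied to the hypothetical input that ends with such a job; phase 2 appends the residual of size $p_j-W_{j-1}$ on the first machine of each solution. The preamble already establishes $p_j > W_{j-1}$, so the residual has positive length. Moreover, the fake input has fake total $W_j^{\text{fake}} = 2W_{j-1}$ and fake $p_j^{\text{fake}} = W_{j-1} = W_j^{\text{fake}}/2$, which lies at the upper boundary of case 2 and is therefore covered by the previously-proved claim.

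For validity of the assignment, applying the claim for case 2 to the fake input gives directly that the fake portion of $j$ is assigned validly, that no idle time is introduced, and that at the end of phase 1 the new solution $A$ (the relabeled old $B$) has loads $(\frac{2W_{j-1}}{\varphi},\frac{2W_{j-1}}{\varphi^2})$ while the new solution $B$ has loads $(\frac{4W_{j-1}}{\varphi^2},\frac{2W_{j-1}}{\varphi^3})$. In phase 2, the residual starts exactly where the first machine of each solution finished phase 1, so it introduces no idle time and no overlap on machine 1. The one remaining non-overlap check is that job $j$ itself never runs on both machines at once: for new $A$, the phase 1 part on machine 2 ends at $\frac{2W_{j-1}}{\varphi^2}$, which is strictly less than the starting time $\frac{2W_{j-1}}{\varphi}$ of the residual on machine 1 because $\varphi>1$; for new $B$, the phase 1 part on machine 2 ends at $\frac{2W_{j-1}}{\varphi^3}$, strictly less than $\frac{4W_{j-1}}{\varphi^2}$ because $\frac{1}{\varphi}<2$. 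Finally $j$ is fully scheduled, since phase 1 schedules $W_{j-1}$ units of it and phase 2 schedules the remaining $p_j-W_{j-1}$.

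For the invariants, note that in case 3 we have $OPT_j=p_j$, because $p_j > W_j/2$. Using $\varphi^2 = \varphi+1$, $\varphi^3 = 2\varphi+1$, and $2-\varphi = \frac{1}{\varphi^2}$, I compute $a_1^j = \frac{2W_{j-1}}{\varphi}+(p_j-W_{j-1}) = p_j + \frac{W_{j-1}}{\varphi^3}$ and $b_1^j = \frac{4W_{j-1}}{\varphi^2}+(p_j-W_{j-1}) = p_j + \frac{3-\varphi}{\varphi^2}W_{j-1}$. Substituting into \eqref{n1}, both the upper bound $a_1^j \leq R\cdot p_j = \frac{2p_j}{\varphi}$ and the lower bound $a_1^j \geq \frac{W_j}{\varphi} = \frac{p_j+W_{j-1}}{\varphi}$ reduce, after one routine simplification each (using $1-\frac{1}{\varphi}=\frac{1}{\varphi^2}$), to the single condition $p_j \geq W_{j-1}$; the analogous substitution into \eqref{n2} for $b_1^j$ yields the same condition. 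Since $p_j > W_{j-1}$ in case 3, both invariants hold.

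The main obstacle I expect is bookkeeping rather than genuinely hard reasoning: I must track carefully which solution is referred to by the labels $A$ and $B$ on each side of the swap, and verify that the phase 1 endpoints written into the algorithm definition ($\frac{2W_{j-1}}{\varphi}$ for new $A$ and $\frac{4W_{j-1}}{\varphi^2}$ for new $B$) are exactly the first-machine loads produced by the fake case 2 assignment. The fact that both the upper and the lower invariant bounds collapse to the same inequality $p_j \geq W_{j-1}$ is consistent with Proposition \ref{propo2}, which shows $R=2/\varphi$ cannot be improved, so the algorithm has no slackness to spare in case 3.
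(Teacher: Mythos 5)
Your proposal is correct and follows essentially the same route as the paper: reduce phase~1 to the already-proved case~2 claim applied to the fake job of size $W_{j-1}$, read off the exact post-phase-1 loads, use the fact that each first machine is more loaded than its second machine to rule out overlap of the residual with the phase-1 part of $j$, and verify the invariants by direct computation. The only (immaterial) difference is algebraic: the paper writes $a_1^j=(R-1)W_j+(2-R)p_j$ and bounds $W_j\leq 2\,OPT_j$, $p_j\leq OPT_j$, whereas you substitute $OPT_j=p_j$ and reduce both bounds to $p_j\geq W_{j-1}$ — equivalent one-line checks.
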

\begin{proof}
The assignment of the part of $j$ of size $W_{j-1}$ satisfies all requirements and the invariants hold, since it is assigned by case 2. Moreover, the machine completion times are exactly $\frac{2W_{j-1}}{\varphi}$ for the first machine and solution $A$, $\frac{2W_{j-1}}{\varphi^2}$ for the second machine and solution $A$,
$\frac{4W_{j-1}}{\varphi^2}$ for the first machine and solution $B$, and $\frac{2W_{j-1}}{\varphi^3}$ for the second machine and solution $B$. In particular, the load of the first machine is larger than the load of the second machine, for each of the two solutions.
Thus, the assignment of the remaining part of $j$ is not in parallel with any part of a job (including $j$), and no idle time is created. The added part completes the assignment of a total size of $p_j$ for $j$ in each one of the solutions.

Next, we prove the invariants. The left hand sizes of \eqref{n1} and \eqref{n2} hold since all the remaining size was added to the first machine, while $\frac{1}{\varphi}<1$ and $\frac{2}{\varphi^2}<1$ hold. We have $$a_1^j=\frac{2W_{j-1}}{\varphi}+p_j-W_{j-1}=W_{j-1}\cdot (\frac 2{\varphi}-1)+p_j = (W_j-p_j)\cdot (R-1)+p_j$$ $$=(R-1)\cdot W_j+(2-R) \cdot p_j \leq 2(R-1)\cdot OPT_j+ (2-R)\cdot OPT_j=R\cdot OPT_j \ .$$
Finally,  $$b_1^j=\frac{4W_{j-1}}{\varphi^2}+p_j-W_{j-1}=W_{j-1}\cdot (\frac 4{\varphi^2}-1)+p_j = (W_j-p_j)\cdot (R^2-1)+p_j$$ $$=(R^2-1)\cdot W_j+(2-R^2)\cdot p_j \leq 2(R^2-1)\cdot OPT_j+ (2-R^2)\cdot OPT_j=R^2\cdot OPT_j \ ,$$ as required.
\end{proof}

\medskip

Since all options for the size of $j$ are covered by the three cases, se summarize the design and analysis of our algorithm with the following theorem, where the lower bound follows from Proposition \ref{propo2}, and the upper bound follows from the first invariant.
\begin{theorem}
The algorithm above has competitive ratio $R\approx 1.236068$ for two machines and two solutions, and this is the best possible competitive ratio for this problem.
\end{theorem}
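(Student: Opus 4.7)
The plan is to establish the two directions separately and then combine them. The lower bound is immediate: Proposition~\ref{propo2} already shows that no algorithm for two identical machines with two parallel preemptive solutions can achieve a competitive ratio strictly below $R$, so I would simply invoke it and focus the remaining work on the matching upper bound.

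For the upper bound, I would argue by induction on $j$ that both invariants \eqref{n1} and \eqref{n2} hold after each job is assigned. The base case $j=0$ is trivial since all loads and $W_0=OPT_0=0$. For the inductive step, I would partition the argument according to which of cases~1,~2, or~3 applies to $p_j$ (using the cutoffs $W_j(2-\varphi)$ and $W_j/2$), observe that these thresholds exhaust all positive values of $p_j$, and then cite the three claims already proved in the text, each of which verifies validity of the assignment, absence of idle time, and the two invariants after assignment. The swap of roles between $A$ and $B$ in cases~2 and~3 is just a relabelling, so the invariants transfer correctly because the target ratios between the two machines in $A$ and in $B$ are exactly exchanged by the swap; the induction hypothesis for the pre-swap $A$ supplies the post-swap $B$ bound, and vice versa.

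Once the invariants are in place, the upper bound on the competitive ratio follows immediately from the right-hand inequality of \eqref{n1}. Indeed, the makespan of the algorithm after $j$ jobs is at most $\max\{a_1^j,a_2^j\}=a_1^j$ (using $a_2^j\le a_1^j$, which itself is a consequence of the invariant as noted in the paper), and this is at most $R\cdot OPT_j$. Therefore the algorithm's competitive ratio is at most $R$, matching the lower bound.

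The only delicate point, and what I would flag as the main obstacle, is ensuring the case analysis is genuinely exhaustive across the interaction of the two solutions and the swap. In particular, after a case~2 or case~3 step one must check that the ``new $A$'' really does satisfy invariant~\eqref{n1} (not \eqref{n2}), and similarly for ``new $B$''; this is exactly what the three claims verify by computing $a_1^j$ and $b_1^j$ explicitly in terms of $W_j$, $p_j$, and $W_{j-1}$ and comparing to $R\cdot OPT_j$ and $R^2\cdot OPT_j$ using $OPT_j\ge\max\{W_j/2,\,p_j\}$ and the identities $R^2+2R=4$, $2\varphi+1=\varphi^3$. Beyond this bookkeeping, no further calculation is needed for the theorem statement itself.
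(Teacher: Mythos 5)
Your proposal is correct and follows essentially the same route as the paper: the lower bound is delegated to Proposition~\ref{propo2}, and the upper bound is obtained by an induction establishing invariants \eqref{n1} and \eqref{n2} through the three case claims, after which the makespan is bounded by $a_1^j\le R\cdot OPT_j$. Your remark about verifying that the post-swap solution labelled $A$ satisfies \eqref{n1} (rather than \eqref{n2}) is exactly the point the claims for cases~2 and~3 handle explicitly, so no gap remains.
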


\section{Two identical machines, decreasing sizes}\label{sor}
This section contains the algorithm for inputs with non-increasing sizes. In this case, we will assume that $p_1=1$, without loss of generality.
Define $\alpha=\sqrt{6}-2\approx 0.4495$, where $\alpha^2+4\alpha=2$. Let $R=6-2\cdot \sqrt{6}\approx 1.10102$, and $$r=\frac{3(2-R)}{2}=3(\sqrt{6}-2)\approx 1.34847 \ . $$ We have $R^2=12(R-1)$ and $r=3\cdot \alpha$, which implies $r^2+12r=18$. Let $\beta=1-\frac{\sqrt{6}}3\approx 0.1835$, where $\beta=\frac{\alpha}{2+\alpha}$.
We also have $$R+r-\alpha=(6-2\cdot \sqrt{6})+(3(\sqrt{6}-2))-(\sqrt{6}-2)=2 \ . $$

Here, we also start with a lower bound on the competitive ratio.

\begin{proposition}\label{propo3}
The competitive ratio of any algorithm for $m=2$ identical machines is at least $R$.
\end{proposition}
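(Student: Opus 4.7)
The plan is to mirror the three-input strategy of Proposition~\ref{propo2}, with the roles of ``sand'' and ``large jobs'' interchanged so that the input is sorted. Normalize $p_1 = 1$ and assume for contradiction that some online algorithm with two parallel solutions has competitive ratio $r' < R$. I will first present two jobs of size~$1$ and let $M_1 \le M_2$ denote the two makespans afterward (relabeling the solutions as needed). Since the adversary can stop here with $\OPT = 1$, competitiveness forces $M_1 < R$. The adversary then branches on whether $M_2 < r$ or $M_2 \ge r$ and presents exactly one additional job, of size either $1$ or $\alpha$ respectively (both at most $1 = p_1$, so sorting is preserved).

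In the first branch ($M_2 < r$), the third job has size~$1$ and $\OPT = 3/2$. Under the standard WLOG assumption of no idle time, each solution has loads $(M_i, 2 - M_i)$ and single-busy slack $2M_i - 2 < 2r - 2 < 1$, so the unit job cannot fit without extension; the new makespan of solution~$i$ is $\max(M_i, 1 + 2 - M_i) = 3 - M_i$, and the best across the two solutions is $3 - M_2$. Using $r = 3\alpha$ and $R = 2 - 2\alpha$, the ratio $2(3 - M_2)/3$ strictly exceeds $2(3 - r)/3 = R$, contradicting $r' < R$.

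In the second branch ($M_2 \ge r$), the third job instead has size~$\alpha$ and $\OPT = (2 + \alpha)/2$. The inequality $\alpha > 2/5$ (which holds for $\alpha = \sqrt{6} - 2$) yields the chain $R < 1 + \alpha/2 < r$, so the job of size $\alpha$ fits into the slack of solution~$2$ (whose makespan stays at $M_2 \ge r$) but forces solution~$1$ to extend to $\alpha + 2 - M_1 > \alpha + 2 - R = 3\alpha = r$. Hence the best new makespan is at least $r$, and using $2 + \alpha = \sqrt{6}$, the ratio is at least $r/((2+\alpha)/2) = 6\alpha/\sqrt{6} = \alpha\sqrt{6} = 6 - 2\sqrt{6} = R$, again contradicting $r' < R$.

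The main obstacle is lining up the algebra so that both branches collapse to exactly the ratio~$R$: one needs the identities $2 - 2\alpha = R$ (used in branch~1 via $r = 3\alpha$) and $\alpha\sqrt{6} = R$ (used in branch~2 via $2 + \alpha = \sqrt{6}$), together with the separation $\alpha > 2/5$ that places $R$ below and $r$ above the threshold $1 + \alpha/2$. These are precisely the relations among $R$, $r$, $\alpha$ that fix the value $R = 6 - 2\sqrt{6}$ for the sorted two-solution case.
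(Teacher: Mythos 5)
Your proof is correct and is essentially the paper's own argument: two unit jobs followed by a third job of size $1$ or $\alpha$ chosen according to the larger of the two maximum completion times, with the same constants and the same algebra (the paper phrases the case split slightly differently, deriving $x_1> r$ as a consequence of competitiveness on the size-$1$ continuation rather than branching on $M_2\ge r$, but the reasoning is identical). The one point to tighten is the ``standard WLOG assumption of no idle time'': in a lower-bound argument one may not restrict the algorithm, so one should argue as the paper does via the measure $t_2$ of times when both machines are simultaneously busy --- since $t_1+t_2=2$ and the maximum completion time is at least $t_1$, one has $t_2\ge 2-M_i$, and your lower bounds $3-M_i$ and $(2-M_i)+\alpha$ survive unchanged.
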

\begin{proof}
Assume by contradiction that there is an algorithm whose competitive ratio is $\rho<R$.

We will use two inputs, where the prefix of the first two jobs is identical for the two inputs, and the third job is different. The algorithm obviously has to satisfy the definition of the competitive ratio for all inputs, including these two inputs. Moreover, the prefix of the first two jobs is treated as an input as well.

The inputs start with two jobs of size $1$ each. For the input consisting of these two jobs, an optimal offline solution assigns one job to each machine, without preemption or idle time. We use the same notation as in  Lemma \ref{propo2}, which is the other lower bound proof (the variables used for the solutions, after the sand jobs were presented). Since the total size of the two jobs here is $2$, we have $t_1+t_2=x_1+x_2=2$. We also assume again that $t_1 \leq x_1$  holds.

Since the makespan of the algorithm for the input consisting of two jobs of size $1$ each, based on the first solution, is at least $t_1$, we have $t_1\leq \rho$ by the definition of competitive ratio for this input. We find that $t_2=2-t_1\geq 2-\rho >2-R$. For the two other inputs, there is a third job for each input. Its size is $1$ for the first input, and it is $\alpha$ for the second input. For the input where the size of the third job is $1$, the makespan is at least $\min\{t_2+1,x_2+1\}=x_2+1$. An optimal offline solution has makespan $\frac 32$, and therefore $x_2+1 \leq \rho \cdot 1.5$. Since $x_1=2-x_2$, we find that $3-x_1 \leq 1.5 \cdot \rho$.
Thus, by $\rho<R$, we have $$x_1 \geq 3- 1.5 \cdot \rho > 3- 1.5 (6-2\sqrt{6})=3\sqrt{6}-6 = r \ . $$
For the input where the size of the second job is $\alpha$, an optimal solution has cost $\frac{2+\alpha}2$. The algorithm has makespan of at least $\min\{t_2+\alpha,x_1\}$. By $t_2 > 2-R$, we have $t_2+\alpha  >2-R+\alpha = r$. We find that $\min\{t_2+\alpha,x_1\}>r$.
The competitive ratio for this input is therefore above $\frac{2r}{2+\alpha}=\frac{6(\sqrt{6}-2)}{\sqrt{6}}=6-2\sqrt{6}=R$, a contradiction to the assumption on the competitive ratio for the algorithm.
\end{proof}

Next, we design an algorithm for inputs consisting of jobs arriving in a sorted order (sorted by non-increasing sizes). The algorithm is designed for two identical machines and constructs two solutions.
We use the same notation as in the previous section.
The assignment of the first job (of size $1$) is always during the time interval $[0,1)$ on the first machine for both solutions. Thus, $a_1^1=b_1^1=1$, and $a_2^0=b_2^0=0$. Moreover, $W_1=1$. These solutions are both optimal, and the analysis of the algorithm will start after the assignment of the second job.

For this variant, we sometimes use only one solution starting with a specified input job. The meaning is that we consider   one  specific solution towards the competitive ratio, and in the other solution the assignment is arbitrary. We will define the time in which we will stop considering the second solution  precisely.

The algorithm uses one of two approaches for designing the solutions. If the second job is relatively small, all future jobs are small as well due to the sorted order of the input. In this case only one solution is used already starting with the second job. The algorithm assigns the first job to the first machine. Then, it assigns jobs or parts of jobs to the second machine, until the second machine receives a sufficient total size, and the required balance is satisfied. From that time on, jobs are assigned so as to keep a fixed load ratio. In the second approach, both solutions are used.
The main idea for each solution is as follows. First, as long as the total size of jobs is relatively small (compared to the size of the first job, which is assumed to be equal to $1$), only one of the machines is used. However, since the load ratios are different, the required load to be achieved is not the same for the two solutions. In this approach, the solutions are swapped if the new job is not small, as in the previous algorithm. There are multiple cases since the situation may be different with respect to the two solutions. However, due to the sorted order, we do not have the situation with a large job, and the most significant assignments are those of the first few jobs.

In the second approach, there is also a case where the second job is small. In this case one can also stop using the second solution after several jobs have arrived (and thus new jobs, which are not larger than previous jobs, are relatively small compared to the total size of already existing jobs). Case 1 of the second apporach deals, in particular, with assigning small jobs. Since job sizes are non-increasing while the function $W_j$ is monotonically increasing, once some job is assigned by case 1 due to its small size, all further jobs will be assigned in the same way (they are assigned using case 1, and due to the same reason of being small). However, case 1 is applied in another situation as well, and therefore we will not take this into account, and we define and analyze both solutions for the first case of the second approach.

\medskip

\noindent{\bf The algorithm.}

The first job is assigned into the time interval $[0,1)$ for both solutions.
Next, for the assignment of further jobs, the algorithm proceeds in one of two approaches, based on the size of the second job, $p_2$.
The first approach is used if $p_2 \leq 0.4$, and otherwise the second approach is used.

\noindent{\it The first approach.}
In this approach, only one solution will be used starting from the second job.
The assignment of job $j$ (for $j\geq 2$) is as follows. If $$a_2^{j-1}+p_j \leq (1-\frac R2)\cdot W_j \ , $$ job $j$ is assigned to the second machine completely during $[a_2^{j-1},a_2^{j-1}+p_j)$, and this is the first case of the first approach. Otherwise, job $j$ is assigned to the second machine during $[a_2^{j-1},(1-\frac R2)\cdot W_j)$, and to the first machine during $[a_1^{j-1}, \frac R2\cdot W_j)$, and this is the second case of the first approach.

\medskip
\noindent{\it The second approach.}
The second approach is for the case $p_2>0.4$.

In this approach, we define the assignment of the second job separately. For solution $A$, the second job is assigned during $[1,R)$ on the first machine and during $[0,p_2+1-R)$ on the second machine. For solution $B$, the second job is assigned during $[1,r)$ on the first machine and during $[0,p_2+1-r)$ on the second machine.

The remaining cases are applied for any job $j\geq 3$.

\medskip
\noindent{\bf Case 1.} In this case, we assume that at least one of the following two conditions hold: $p_j \leq W_j \cdot (1-\frac{\sqrt{6}}3)$, where $1-\frac{\sqrt{6}}3 \approx 0.183503$ and $W_j\leq \sqrt{6}=\frac{6-R}{2}\approx 2.4494897$ (it is possible that both conditions hold simultaneously, but we apply this case even if just one of the two conditions holds).
In this case the solutions are not swapped, and we describe the assignment for each one of the solutions separately.

\smallskip

{\bf Case 1.1 for solution $\boldsymbol{A}$.} In this case we assume that we additionally have $a_2^{j-1}+p_j \leq \frac{2-R}2 \cdot {W_j}$.
Job $j$ is assigned to the second machine completely during $[a_2^{j-1},a_2^{j-1}+p_j)$.

\medskip

{\bf Case 1.2 for solution $\boldsymbol{A}$.}
If case 1.1 was not applied, job $j$ is assigned to the second machine during $[a_2^{j-1}, {W_j} \cdot {\frac{2-R}2})$, and to the first machine during $[a_1^{j-1}, {W_j}\cdot{\frac R2})$.

\medskip

{\bf Case 1.1 for solution $\boldsymbol{B}$.} In this case we assume that we additionally have $b_2^{j-1}+p_j \leq \frac 25 \cdot {W_j}$.
Job $j$ is assigned to the second machine during $[b_2^{j-1},b_2^{j-1}+p_j)$.

\medskip

{\bf Case 1.2 for solution $\boldsymbol{B}$.}
If case 1.1 was not applied, job $j$ is assigned to the second machine during $[b_2^{j-1},\frac 25 \cdot {W_j})$, and to the first machine during $[b_1^{j-1},\frac 35 \cdot {W_j})$.

\medskip

\noindent{\bf Case 2.} 
In this case, which is applied in any remaining situation, the solutions are swapped. We describe the assignment for each one of the solutions.

For solution $A$ (which becomes solution $B$), $j$ is assigned as follows. Let $$\Gamma_j=\min\{a_2^{j-1}+p_j,a_1^{j-1},0.4\cdot W_j\} \ .$$
Job $j$ is assigned to the second machine during the time interval $[a_2^{j-1},\Gamma_j)$, and it is assigned to the first machine during the time interval $[a_1^{j-1},W_j-\Gamma_j)$.

For solution $B$ (which becomes solution $A$), $j$ is assigned to the second machine, during the time interval $[b_2^{j-1},\frac{2-R}{2}\cdot W_j)$, and to the first machine during the time interval $[b_1^{j-1},\frac{R}{2}\cdot W_j)$.

\medskip

\noindent{\bf Analysis.}

Next, we analyze the algorithm.
The assignment of the first job is optimal since the load is equal to $p_1$ (for both solutions). Thus, we will deal with the case $j>1$.

We start with proving the competitive ratio for the first approach.
For solution $A$, we will require that $$\max\{1,\frac{R}2 \cdot {W_j}\} \leq a_1^j \leq  R \cdot OPT_j$$ will always hold. By this constraint which will be proved as an invariant, and by the lack of idle time, we will have $a_2^j \leq (1-\frac R2)\cdot W_j$ (which can also be seen directly from the assignment in both cases). In particular, this means (by $R>1$) that $a_1^j > a_2^j$. For $j=1$, we have $a_1^j=W_j=1$, $OPT_j=1$ and the invariant holds by $1<R<2$. We will always have $a_j^1 \geq a_1^1= 1$ (for any $j\geq 1$), and therefore we will prove that  $\frac{R}2 \cdot {W_j} \leq a_1^j \leq  R \cdot OPT_j$ holds after every additional job assignment (that is, for $j\geq 2$).
In both cases of the first approach no idle time is introduced. We will consider the further details for each case separately.

In the first case, the new job is assigned completely to one machine, and there cannot be an overlap. Additionally,  in this case we have $a_1^j = a_1^{j-1} \leq  R \cdot OPT_{j-1} \leq  R \cdot OPT_{j}$, $a_2^j \leq (1-\frac R2)\cdot W_j$, and therefore, since $a_1^j+a_2^j=W_j$, we have $a_1^j \geq \frac R2\cdot W_j$, so the invariant holds.

In the second case the job is assigned completely since after the assignment the total assigned size is $W_j$. We show that there is no overlap between the two parts of the job, that is, $$(1-\frac R2)\cdot W_j \leq a_1^{j-1} \ , $$  If
$(1-\frac R2)\cdot W_j \leq 1$, we are done by $a_1^{j-1} \geq 1$, and therefore are left with the case $(1-\frac R2)\cdot W_j >1$, or alternatively, $W_j > \frac{2}{2-R}$. By $a_1^{j-1} \geq \frac{R}2 \cdot {W_{j-1}}$, it is sufficient to prove that $(1-\frac R2)\cdot W_j \leq \frac{R}2 \cdot {W_{j-1}}$ holds. By $W_j=W_{j-1}+p_j$, this inequality is equivalent to $$(1-\frac R2)\cdot W_j \leq \frac{R}2 \cdot {W_{j}}-\frac R2 \cdot p_j \ , $$ or alternatively,
$R \cdot p_j \leq 2(R-1) \cdot {W_{j}}$.
We have $\frac{2(R-1)}R \cdot {W_{j}}  > \frac{2(R-1)}R \cdot \frac{2}{2-R}$, and we are done by $\frac{2(R-1)}R \cdot \frac{2}{2-R} > 0.408$ and $p_j \leq 0.4$.
It is left to show that the intervals have non-negative lengths in the second case, that is, $a_1^{j-1} \leq \frac R2\cdot W_j$ and $a_2^{j-1} \leq (1-\frac R2)\cdot W_j$. Since $$W_j=a_1^{j-1}+a_2^{j-1}+p_j \mbox{ \ \ \ \ and \ \ \ \ } a_2^{j-1}+p_j > (1-\frac R2)\cdot W_j$$ hold, we have $a_1^{j-1} < W_j-(1-\frac R2)\cdot W_j =\frac R2 \cdot W_j$. By the invariant $a_1^{j-1} \geq \frac R2 \cdot W_{j-1}$, we have $a_2^{j-1} \leq \frac{2-R}2 \cdot W_{j-1} < \frac{2-R}2 \cdot W_{j}$.
The invariant will hold for the second case since $a_1^j=\frac R2 \cdot W_j$, and $OPT_j \geq \frac{W_j}2$.

\medskip

Next, we proceed with proving the analysis of the competitive ratio for second approach.
Recall that the case $j=2$ was different. We prove this case separately.
The interval on the second machine is non-empty since $R<r<1.35$ and $$p_2+1-R > p_2+1-r > 0.4+1- 1.35 >0 \ . $$
The second job is assigned without idle time, and without overlap since $p_2+1-r < p_2+1-R <1$ by $p_2\leq p_1=1$ and $R>1$. Thus, $a_1^2=R$ and $b_1^2=r$.
Note that $OPT_2=1$, which implies $a_1^2=R\cdot OPT_2$ (so the competitive ratio is not violated) and $b_1^2= r \cdot OPT_2$. This allows us to define the following invariants for $j\geq 2$: $$\max\{R,\frac{R}2 \cdot {W_j}\} \leq a_1^j \leq  R \cdot OPT_j \mbox{\ \  \ \ and  \ \ \ } \max\{r, 0.6 \cdot {W_j}\} \leq b_1^j \leq  r \cdot OPT_j \ . $$  By these invariants and lack of idle time, we will always have $a_2^j \leq \frac{2-R}2 \cdot W_j$ and $b_2^j \leq 0.4 \cdot W_j$ (and the competitive ratio will hold using solution $A$).

For $j=2$, we have $W_j=p_1+p_2 \leq 2\cdot p_1=2$, so $\frac{R}2 \cdot {W_j} \leq R$ and $0.6 \cdot {W_j} \leq 1.2 <r$. The invariants hold for $j=2$ since $a_1^2=R=R\cdot OPT_j$ and $a_1^2=r=r\cdot OPT_j$.
Since the loads cannot decrease over time, and by the assignment of the second job, for $j\geq 2$ it always holds that $a_1^j \geq R$ and $b_1^j \geq r$, and thus, it will be sufficient to prove $\frac{R}2 \cdot {W_j} \leq a_1^j \leq  R \cdot OPT_j$ and $0.6 \cdot W_j \leq b_1^j \leq  r \cdot OPT_j$ for $j\geq 3$ (using induction and the assignment of each job). There are several cases for the assignment. Since the cases $j=1,2$ were analyzed completely, in what follows, we always assume that $j\geq 3$. Note that for $j\geq 3$, it holds that $W_j \geq p_1+p_2+p_j \geq 3\cdot p_j$, and therefore $p_j \leq \frac {W_j}3$.

\begin{claim}
The assignment of job $j$ in case 1 is valid in the sense that a total size of $p_j$ is assigned, and there is no overlap between the time slots allocated for $p_j$ on the two machines. No idle time is introduced, and the invariants hold after job $j$ is assigned.
\end{claim}
\begin{proof}
For case 1.1 and both solutions, the assignment cannot be invalid since the job is not preempted. No idle time is introduced since the machine that receives the job is the second machine, and the interval allocated for the job starts at time $a_2^{j-1}$ for the first solution and $b_2^{j-1}$ for the second solution, and its length is equal to the size of the assigned job. In fact, for case 1.2, while the job is preempted and split into two parts, each part is assigned without introducing any idle time.

In order to prove all properties excluding the invariants, it is left to analyze the validity of assignment in the remaining cases and to show that the job is fully assigned in these cases. We show that the two parts of $j$ do not overlap for both solutions, the parts have non-negative sizes, and their total size is equal to $j$.

We start with the assignments for solution $A$.
We show that in case 1.2  the intervals have positive lengths. We have $$a_2^{j-1}\leq \frac {2-R}2 \cdot {W_{j-1}}<\frac {2-R}2 \cdot {W_{j}}\ , $$ by the invariants for the previous time $j-1$, and since $p_j>0$. Additionally,  since case 1.1 not applied, it holds that $a_2^{j-1}+p_j > \frac {2-R}2 \cdot {W_{j}}$, which implies that $$a_1^{j-1}=W_j-p_j-a_2^{j-1} <W_j-\frac {2-R}2 \cdot {W_{j}}=\frac R2 \cdot {W_j} \ , $$ and thus $a_1^{j-1} \leq\frac R2 \cdot {W_j} $ holds. We have proved that both lengths of the allocated intervals are indeed positive.
Moreover, $$(\frac{2-R}2 \cdot {W_{j}}-a_2^{j-1})+(\frac{R}2 \cdot {W_{j}}-a_1^{j-1})=W_j-W_{j-1}=p_j \ , $$ and $j$ is assigned completely.

To show that there is no overlap between the parts, we prove that $\frac {2-R}2 \cdot {W_{j}} \leq  a_1^{j-1}$ holds.  We start with the case $p_j \leq W_j \cdot (1-\frac{\sqrt{6}}3)$.
We have
$W_j-W_{j-1}=p_j \leq W_j \cdot (1-\frac{\sqrt{6}}3)$, and therefore $W_{j-1} \geq W_j \cdot \frac{\sqrt{6}}3$ (where $\frac{\sqrt{6}}3 \approx 0.81649658$). Using the invariant for the loads, we have $a_1^{j-1} \geq \frac R2 \cdot W_{j-1}$, and we find  that $a_1^{j-1} \geq  \frac R2 \cdot W_j \cdot \frac{\sqrt{6}}3$. It is left to see that $\frac{R}{\sqrt{6}} \geq\frac {2-R}2$ holds, which is equivalent to $R\cdot (\frac{1}{2}+\frac{1}{\sqrt{6}}) \geq 1$, and indeed $$R\cdot(\frac{1}{2}+\frac{1}{\sqrt{6}})=(6-2\cdot \sqrt{6})\cdot(\frac{3+\sqrt{6}}{6})=1 \ . $$
In the case $W_j\leq \frac{6-R}{2}$, we find $\frac {2-R}2 \cdot {W_{j}}\leq \frac{(2-R)\cdot(6-R)}{4}=R$, and we are done by $a_1^{j-1} \geq R$.

We continue with the assignments for solution $B$.
We show that in case 1.2  the intervals have positive lengths. We have $b_2^{j-1}\leq 0.4 \cdot {W_{j-1}}< 0.4\cdot {W_{j}}$, by the invariants for the previous time $j-1$, and since $p_j>0$. Additionally,  since case 1.1 not applied, it holds that $$b_2^{j-1}+p_j > 0.4 \cdot {W_{j}} \ , $$ which implies $$b_1^{j-1}=W_j-p_j-b_2^{j-1} <W_j- 0.4 \cdot {W_{j}}=0.6 \cdot {W_j} \ , $$ and thus $b_1^{j-1} \leq 0.6 \cdot {W_j} $ holds.
Moreover, $(0.4 \cdot {W_{j}}-b_2^{j-1})+(0.6 \cdot {W_{j}}-b_1^{j-1})=W_j-W_{j-1}=p_j$, and $j$ is assigned completely.
To show that there is no overlap between the parts, we prove that $0.4 \cdot {W_{j}} \leq  b_1^{j-1}$ holds.  We start with the case $p_j \leq W_j \cdot (1-\frac{\sqrt{6}}3)$.
Recall that we have $W_{j-1} \geq W_j \cdot \frac{\sqrt{6}}3$. Using the invariant for the loads, we have $b_1^{j-1} \geq 0.6 \cdot W_{j-1}$, and we find  that $b_1^{j-1} \geq  0.6 \cdot W_j \cdot \frac{\sqrt{6}}3 > 0.4 \cdot W_j$, since $0.6 \cdot \frac{\sqrt{6}}3 \approx 0.4898979485$. In the case $W_j\leq \frac{6-R}{2}$, by  $$0.4 \cdot W_j < \frac {2-R}2 \cdot {W_{j}} \leq  \frac{(2-R)\cdot(6-R)}{4}= R \ , $$ we are done by $b_1^{j-1} \geq r >R$.

We now show that the invariants will hold in all cases after the assignment. For case 1.2, if it is applied for solution $A$, the exact value of $a_1^j$ is $R \cdot \frac{W_{j}}{2}$. Similarly, if it is applied for solution $B$, the exact value of $b_1^j$ is $0.6 \cdot W_j$. This proves that the left hand side of the two constraints holds. Since $OPT_j \geq \frac{W_2}2$, and $r>1.2$, the right hand side of the constraints holds as well.

For case 1.1, if it is applied for solution $A$, we have $a_1^j=a_1^{j-1}$, and if it is applied for solution $B$, we have $b_1^j=b_1^{j-1}$. By $OPT_j\geq OPT_{j-1}$, it remains to show the lower bounds on the completion times of the first machine for both solutions and case 1.1. If case 1.1 is applied for solution $A$, we have $a_2^j=a_2^{j-1}+p_j\leq \frac{2-R}2 \cdot W_j$, and thus $a_1^j \geq W_j-\frac{2-R}2 \cdot W_j=\frac R2 \cdot W_j $.
If case 1.1 is applied for solution $B$, we have $b_2^j=b_2^{j-1}+p_j \leq 0.4\cdot W_j$, and thus $b_1^j \geq W_j-0.4\cdot W_j = 0.6 \cdot W_j $.
\end{proof}

We completed the analysis for case 1, and continue with case 2.
In this remaining case we have $p_j > W_j \cdot (1-\frac{\sqrt{6}}3)$, and $W_j >\frac{6-R}{2}=\sqrt{6}$. This yields $$W_j-W_{j-1}=p_j> W_j \cdot (1-\frac{\sqrt{6}}3) \mbox{ \ \ \ and \ \ \ } W_{j-1}<\frac{\sqrt{6}}3 \cdot W_j \ . $$ By $W_j >\sqrt{6}>2$, it holds that $OPT_j=\frac{W_j}2>1$. The required constraints of the invariants will therefore be $a_1^j=R \cdot \frac{W_j}2$ (since the two bounds from above and from below are equal), and $0.6 \cdot W_j \leq b_1^j \leq \frac r2 \cdot W_j$.
 Note that $$\Gamma_j=\min\{a_2^{j-1}+p_j,a_1^{j-1},0.4\cdot W_j\}$$ is strictly positive since $j\geq 3$, and all three considered values are positive. However, none of the values can exceed $W_j$, so this value is not larger than $W_j$.
\begin{claim}
The assignment of job $j$ in case 2 is valid in the sense that a total size of $p_j$ is assigned, and there is no overlap between the time slots allocated for $p_j$ on the two machines. No idle time is introduced, and the invariants hold after job $j$ is assigned.
\end{claim}
\begin{proof}
By the assignment, no idle time is introduced. Thus, the solutions will not have idle time (since they are valid, which is proved here). Moreover, if the solutions are valid, $j$ is assigned completely. This holds for the two solutions because there is no idle time and the total load after the assignment is $W_j$.
It is left to show that the lengths of the intervals are non-negative, there is no overlap between the parts, and the invariants will hold after the assignment.

Now, we show that the lengths of the intervals are non-negative, and that there is no overlap between the parts of the jobs. The six properties to be shown are
$b_1^{j-1} \leq \frac R2 \cdot W_j$, $a_1^{j-1} \leq W_j-\Gamma_j$, $b_2^{j-1} \leq \frac {2-R}2 \cdot W_j$, $a_2^{j-1} \leq \Gamma_j$, $\frac{2-R}2 \cdot W_j \leq b_1^{j-1}$ and $\Gamma_j \leq a_1^{j-1}$. The sixth condition holds directly by definition of $\Gamma_j$. Similarly, the second condition also holds by this definition since $$\Gamma_j \leq a_2^{j-1}+p_j= W_j-a_1^{j-1} \ . $$ For the fourth condition, we consider all possibilities for $\Gamma_j$. If $\Gamma_j=a_2^{j-1}+p_j$, we are done since $p_j>0$. If $\Gamma_j=a_1^{j-1}$, we are done by $$a_2^{j-1} \leq \frac{2-R}{2}\cdot W_{j-1}< \frac{R}{2}\cdot W_{j-1} \leq a_1^{j-1} \ . $$ If $\Gamma_j=0.4\cdot W_j$, we are done by $$a_2^{j-1}\leq  \frac{2-R}{2}\cdot W_{j-1} <\frac{2-R}{2}\cdot  \frac{\sqrt{6}}3 \cdot W_j < 0.368\cdot W_j < 0.4 \cdot W_j \ . $$

The properties $b_1^{j-1} \geq 0.6 \cdot W_{j-1}$ and $b_2^{j-1} \leq 0.4 \cdot W_{j-1}$ follow from the invariants for $j-1$. Moreover, by $OPT_{j-1} = \max\{1,\frac{W_{j-1}}2\}$, it holds that $$b_1^{j-1} \leq r \cdot \max\{1,\frac{W_{j-1}}2\} \ . $$

Now, we prove that $b_2^{j-1} \leq \frac {2-R}2 \cdot W_j$ holds.
We use $b_2^{j-1} \leq 0.4 \cdot W_{j-1}$, and it is sufficient to prove that
$0.4 \cdot W_{j-1}\leq \frac{2-R}{2}\cdot W_j$ holds. By $W_{j-1}<  W_j$, it is sufficient to show that $0.4 \leq \frac{2-R}{2}$ holds, which is indeed the case as $\frac{2-R}{2} >0.449$.

Next, we prove that $b_1^{j-1} \leq \frac R2 \cdot W_j$ holds.
We use $W_j > \sqrt{6}$ for the situation $b_1^{j-1} \leq r$, and we find that $\frac{R}{2}\cdot W_j > \frac{R}{2} \cdot \sqrt{6}=r$. Otherwise, $b_1^{j-1} > r$, and by the upper bound in the invariant for $b_1^{j-1}$ we get $b_1^{j-1} \leq r \cdot \frac{W_{j-1}}2$ and $W_{j-1}>2$. By $W_{j-1}<\frac{\sqrt{6}}3 \cdot W_j$, we get $$b_1^{j-1} \leq r \cdot \frac{(\frac{\sqrt{6}}3) \cdot W_j}2 = \frac{R}2 \cdot W_j \ . $$

Now, we prove that $\frac{2-R}2 \cdot W_j \leq b_1^{j-1}$ holds.
We use $b_1^{j-1} \geq 0.6 \cdot W_{j-1}$, and it is sufficient to prove that $0.6 \cdot W_{j-1} \geq \frac{2-R}{2}\cdot W_j$.
If $j\geq 4$, we have $W_j=W_{j-1}+p_j$, where $$W_{j-1} \geq (j-1)\cdot p_j = (j-1) \cdot (W_j-W_{j-1}) \ , $$  and therefore $W_j \leq \frac{j}{j-1} \cdot W_{j-1} \leq \frac 43 \cdot W_{j-1}$. Thus, it is sufficient to prove that $\frac{2-R}2 \cdot \frac 43  \leq 0.6$. Indeed, it holds that $\frac{2-R}2 \cdot \frac 43  < 0.59932$. In the case $j=3$, we have $b_1^{j-1}= r$, thus, it is required to prove $\frac{2-R}{2}\cdot W_j \leq r$. Since $W_3 \leq 3$, it is sufficient to prove
$3 \cdot \frac{2-R}{2} \leq r$. This holds with equality by the definition of $r$.

Finally, we show that the constraints will hold. For solution $A$ which was obtained from solution $B$, we have $a_1^j=\frac{R}{2}\cdot W_j$, which proves that the invariants hold since $OPT_j  =\frac{W_j}2$. For solution $B$ which was obtained from solution $A$, we will prove $ 0.6 \cdot W_j \leq b_1^j \leq \frac r2 \cdot W_j$. Since $\Gamma_j \leq 0.4 \cdot W_j$, we have $$b_1^j=W_j-\Gamma_j \geq 0.6 \cdot W_j \ .$$

It remains to prove that $W_j-\Gamma_j = b_1^j \leq \frac r2 \cdot W_j$ holds.
We consider the three cases for $\Gamma_j$. If $\Gamma_j=a_2^{j-1}+p_j$, we have $b_1^j=a_1^{j-1} \leq R \cdot OPT_{j-1}$. We have $OPT_{j-1}=\max\{1,\frac{W_{j-1}}2\} \leq \max\{1,\frac{W_{j}}2\}=\frac{W_j}2$. Thus, $b_1^j \leq \frac{R}{2}  \cdot W_{j} < \frac r2 \cdot W_j$.

Otherwise, if $\Gamma_j=a_1^{j-1}$, we have $b_1^j=W_j-a_1^{j-1}=a_2^{j-1}+p_j$, where $a_2^{j-1}\leq \frac{2-R}2 \cdot W_{j-1}$, and therefore $$b_1^j \leq \frac{2-R}2 \cdot W_{j-1} +p_j=\frac{2-R}2\cdot W_j + \frac R2 \cdot p_j \ .$$ We have $p_j \leq \frac{W_j}3$, since $W_j \geq p_1+p_2+p_j\geq 3\cdot p_j$. Thus, $$b_1^j \leq W_j\cdot (1-\frac R2+ \frac R6)=(1-\frac R3) \cdot W_j \ .$$ We are done since $\frac r2  >0.67$ and $1-\frac R3 <0.64$.

In the last case, $\Gamma_j=0.4\cdot W_j$, we are done since $W_j-\Gamma_j = 0.6 \cdot W_j \leq \frac r2 \cdot W_j$, since $r>1.2$.
\end{proof}

Our two approaches and the cases into which the algorithm was split cover all possible cases. Thus, we summarize the design and analysis of our algorithm with the following theorem, where the lower bound follows from Proposition \ref{propo3}, and the upper bound follows from the invariants.
\begin{theorem}
The algorithm above has competitive ratio $R\approx 1.10102051$ for two machines and two solutions, for inputs with non-increasing job sizes, and this is the best possible competitive ratio for this problem.
\end{theorem}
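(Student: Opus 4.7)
The plan is to obtain this theorem as a direct synthesis of the preceding work. The lower bound $R$ is exactly Proposition \ref{propo3}. For the upper bound, it suffices to prove by induction on $j$ that the algorithm always maintains the invariant $a_1^j \le R \cdot OPT_j$ for solution $A$; since by convention $a_2^j \le a_1^j$, the maximum completion time of solution $A$ is $a_1^j$, and hence the algorithm's makespan, which is the minimum over solutions of the maximum completion time, is at most $R \cdot OPT_j$.

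The induction branches on the approach selected by the algorithm based on $p_2$. In the first approach ($p_2 \le 0.4$), only solution $A$ is tracked: the base case $j=1$ gives $a_1^1 = 1 = OPT_1$, and for $j \ge 2$ the inductive step was verified when analyzing the first and second cases of that approach, confirming no idle time, no overlap between the two parts of $j$, and the bound $\frac{R}{2} W_j \le a_1^j \le R \cdot OPT_j$. In the second approach ($p_2 > 0.4$), the base case $j=2$ is checked directly from the explicit assignment, yielding $a_1^2 = R = R \cdot OPT_2$ and $b_1^2 = r = r \cdot OPT_2$. For $j \ge 3$ the new job is handled by either Case 1 or Case 2, and the three Claims already proven establish both validity of the assignment and preservation of the invariants $\max\{R, \tfrac{R}{2} W_j\} \le a_1^j \le R \cdot OPT_j$ and $\max\{r, 0.6\, W_j\} \le b_1^j \le r \cdot OPT_j$. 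Applying the $A$-invariant at the final $j$ and combining with Proposition \ref{propo3} closes the argument.

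The main obstacle in this chain of reasoning lies entirely inside the preceding Claims rather than in their synthesis. The two most delicate points there, to which I would pay closest attention, are: first, the no-overlap verification in Case 1.2 for solution $A$ when $p_j \le W_j(1-\sqrt{6}/3)$, which crucially uses the identity $R(\tfrac12 + \tfrac{1}{\sqrt{6}}) = 1$ to show that $\tfrac{2-R}{2} W_j \le a_1^{j-1}$; and second, the upper bound $b_1^j \le \tfrac{r}{2} W_j$ in Case 2, which requires splitting on which of $a_2^{j-1}+p_j$, $a_1^{j-1}$, or $0.4\,W_j$ attains the minimum defining $\Gamma_j$, and exploits the bound $p_j \le W_j/3$ available for $j \ge 3$ together with the defining relation $r + R = 2 + \alpha$. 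Once these ingredients are in hand, the theorem follows immediately.
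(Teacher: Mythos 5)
Your proposal is correct and follows exactly the paper's route: the theorem is stated as a summary whose lower bound is Proposition \ref{propo3} and whose upper bound is the invariant $a_1^j \le R\cdot OPT_j$ maintained through both approaches and all cases, combined with $a_1^j \ge a_2^j$ so that solution $A$'s maximum completion time bounds the makespan. The delicate points you flag (the no-overlap inequality via $R(\tfrac 12+\tfrac{1}{\sqrt 6})=1$ and the case split on $\Gamma_j$ for $b_1^j \le \tfrac r2 W_j$) are indeed where the paper's claims do the real work.
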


\section{Conclusion}
One can study various online problems with respect to a small number of parallel solutions. The most natural generalizations of the current work are as follows. For $m>2$ multiple machines, one goal can be to design an algorithm that uses two solutions, such that this algorithm has a smaller competitive ratio compared to earlier work  \cite{ChVlWo95,Sg97}. That is, the goal is to design an algorithm with two solutions for the problem that was studied here (with unsorted inputs) for a larger number of machines, whose competitive ratio is smaller than $\frac{e}{e-1}$ for any $m$, or to design an algorithm whose competitive ratio is smaller than $\frac{m^m}{m^m-(m-1)^m}$ for a fixed value of $m$ (for example, an algorithm of competitive ratio below $\frac{27}{19}\approx 1.42105$ for $m=3$).  Other generalizations can be an analysis for uniformly related machines, or a study of all these problems for a different small number of solutions. Sorted inputs with $m>2$ identical or uniformly related machines and multiple solutions can be studied as well.

\bibliographystyle{abbrv}


\end{document}